%% file: main.tex
\PassOptionsToPackage{unicode}{hyperref}
\PassOptionsToPackage{hyphens}{url}
\PassOptionsToPackage{dvipsnames,svgnames,x11names}{xcolor}
\documentclass[12pt]{article}

\usepackage{amsthm}
\usepackage{amsmath,amssymb}
\usepackage{iftex}
\ifPDFTeX
  \usepackage[T1]{fontenc}
  \usepackage[utf8]{inputenc}
  \usepackage{textcomp} 
\else 
  \usepackage{unicode-math}
  \defaultfontfeatures{Scale=MatchLowercase}
  \defaultfontfeatures[\rmfamily]{Ligatures=TeX,Scale=1}
\fi
\usepackage{lmodern}
\ifPDFTeX\else  
\fi
\IfFileExists{upquote.sty}{\usepackage{upquote}}{}
\makeatletter
\@ifundefined{KOMAClassName}{
  \IfFileExists{parskip.sty}{%
    \usepackage{parskip}
  }{
    \setlength{\parindent}{0pt}
    \setlength{\parskip}{6pt plus 2pt minus 1pt}}
}{
  \KOMAoptions{parskip=half}}
\makeatother
\usepackage{xcolor}
\makeatletter
\ifx\paragraph\undefined\else
  \let\oldparagraph\paragraph
  \renewcommand{\paragraph}{
    \@ifstar
      \xxxParagraphStar
      \xxxParagraphNoStar
  }
  \newcommand{\xxxParagraphStar}[1]{\oldparagraph*{#1}\mbox{}}
  \newcommand{\xxxParagraphNoStar}[1]{\oldparagraph{#1}\mbox{}}
\fi
\ifx\subparagraph\undefined\else
  \let\oldsubparagraph\subparagraph
  \renewcommand{\subparagraph}{
    \@ifstar
      \xxxSubParagraphStar
      \xxxSubParagraphNoStar
  }
  \newcommand{\xxxSubParagraphStar}[1]{\oldsubparagraph*{#1}\mbox{}}
  \newcommand{\xxxSubParagraphNoStar}[1]{\oldsubparagraph{#1}\mbox{}}
\fi
\makeatother

\usepackage{longtable,booktabs,array}
\usepackage{calc} 
\usepackage{etoolbox}
\makeatletter
\patchcmd\longtable{\par}{\if@noskipsec\mbox{}\fi\par}{}{}
\makeatother
\IfFileExists{footnotehyper.sty}{\usepackage{footnotehyper}}{\usepackage{footnote}}
\makesavenoteenv{longtable}
\usepackage{graphicx}
\makeatletter
\def\maxwidth{\ifdim\Gin@nat@width>\linewidth\linewidth\else\Gin@nat@width\fi}
\def\maxheight{\ifdim\Gin@nat@height>\textheight\textheight\else\Gin@nat@height\fi}
\makeatother
\setkeys{Gin}{width=\maxwidth,height=\maxheight,keepaspectratio}
\makeatletter
\def\fps@figure{htbp}
\makeatother

\makeatletter
\@ifpackageloaded{caption}{}{\usepackage{caption}}
\AtBeginDocument{%
\ifdefined\contentsname
  \renewcommand*\contentsname{Table of contents}
\else
  \newcommand\contentsname{Table of contents}
\fi
\ifdefined\listfigurename
  \renewcommand*\listfigurename{List of Figures}
\else
  \newcommand\listfigurename{List of Figures}
\fi
\ifdefined\listtablename
  \renewcommand*\listtablename{List of Tables}
\else
  \newcommand\listtablename{List of Tables}
\fi
\ifdefined\figurename
  \renewcommand*\figurename{Figure}
\else
  \newcommand\figurename{Figure}
\fi
\ifdefined\tablename
  \renewcommand*\tablename{Table}
\else
  \newcommand\tablename{Table}
\fi
}
\@ifpackageloaded{float}{}{\usepackage{float}}
\floatstyle{ruled}
\@ifundefined{c@chapter}{\newfloat{codelisting}{h}{lop}}{\newfloat{codelisting}{h}{lop}[chapter]}
\floatname{codelisting}{Listing}

\makeatother
\makeatletter
\@ifpackageloaded{caption}{}{\usepackage{caption}}
\@ifpackageloaded{subcaption}{}{\usepackage{subcaption}}
\makeatother

\ifLuaTeX
  \usepackage{selnolig}  
\fi
\usepackage[]{natbib}
\usepackage{bookmark}

\IfFileExists{xurl.sty}{\usepackage{xurl}}{} 
\hypersetup{
  pdftitle={Title},
  pdfauthor={Author 1; Author 2},
  pdfkeywords={3 to 6 keywords, that do not appear in the title},
  colorlinks=true,
  linkcolor={blue},
  filecolor={Maroon},
  citecolor={Blue},
  urlcolor={Blue},
  pdfcreator={LaTeX via pandoc}}

\newcommand{\anon}{1}

\input{notation}
\newtheorem{assumption}{Assumption}

\newtheorem{theorem}{Theorem}
\newtheorem{remark}{Remark}

\input{table/application_numbers.tex}

\begin{document}

\def\spacingset#1{\renewcommand{\baselinestretch}%
{#1}\small\normalsize} \spacingset{1}


\if1\anon
{
  \title{\bf Power and Sample Size Calculations for Hybrid Controlled Trials}
\author[1,2]{Ke Zhu}
\author[1]{Shu Yang}
\author[2]{Xiaofei Wang\footnote{Address for correspondence: Xiaofei Wang, Department of Biostatistics and Bioinformatics, Duke University, Durham, NC 27710, U.S.A. Email: xiaofei.wang@duke.edu}}
\affil[1]{\small Department of Statistics, North Carolina State University, Raleigh, NC 27695, U.S.A.}
\affil[2]{Department of Biostatistics and Bioinformatics, Duke University, Durham, NC 27710, U.S.A.}
  \date{}
  \maketitle
} \fi

\if0\anon
{
  \bigskip
  \bigskip
  \bigskip
  \begin{center}
    {\LARGE\bf Power and Sample Size Calculations for Hybrid Controlled Trials}
\end{center}
  \medskip
} \fi

\bigskip
\begin{abstract}
Hybrid controlled trials (HCTs) augment randomized controls with external controls (ECs) to address practical challenges in randomized controlled trials (RCTs) and improve statistical power in settings such as rare diseases, oncology, and pediatrics. However, prospective sample-size determination is challenging because the required RCT sample size depends on the comparability of ECs with RCT controls, which are unavailable at the planning stage. We propose a 5+3 design for HCT sample-size determination based on an inverse probability weighting estimator of the average treatment effect. The framework uses five conventional RCT design parameters and three additional scalar parameters characterizing EC comparability: the number of outcome-drift-free ECs, an overlap coefficient for the covariate distributions of the RCT and ECs, and a correlation coefficient linking the sampling mechanism to the control potential outcome. We establish the asymptotic distribution of the estimator and prove that its variance is determined by these design parameters under the proposed working models, yielding sample-size calculations for both continuous and binary outcomes. Simulation studies evaluate finite-sample performance, and a real clinical application illustrates {its} practical use. The method is implemented in the \texttt{hctdesign} R package.
\end{abstract}

\noindent%
{\it Keywords:} Causal inference; External control; Inverse probability weighting; Propensity score; Real-world evidence.
\vfill

\spacingset{1.8} 

\section{Introduction}

Randomized controlled trials (RCTs) remain the gold standard for evaluating new interventions, but a fully randomized comparison may be infeasible, inefficient, or ethically challenging. Such difficulties arise particularly in rare diseases, oncology, and pediatric settings, as well as {when} effective existing therapies or strong patient preferences limit willingness to enroll in a randomized control arm \citep{rahman2021leveraging,mishra2022external,ye2024considerations,selukar2025synthetic}. Hybrid controlled trials (HCTs) offer a potential solution by retaining randomization while augmenting the randomized control arm with external controls (ECs) drawn from historical trials or real-world data \citep{pocock1976combination,ventz2022design}. Appropriate borrowing of ECs can improve precision, reduce the number of randomized controls required, or permit unequal randomization in favor of the experimental treatment. A growing literature has developed both Bayesian and frequentist methods for analyzing HCTs \citep{zhu2026review}.

Beyond the choice of analysis method, a central design question for HCTs is how many new patients should be randomized, a decision that must be made before recruitment begins. This is important for at least two reasons. First, regulatory guidance emphasizes prespecification when ECs are incorporated into analyses to enhance credibility and guard against post hoc choices \citep{FDA2023}. Second, sponsors need to determine the RCT sample size before trial initiation, and this choice directly affects operating characteristics, recruitment burden, and cost. Despite the {growing} methodological literature on HCT analysis, methods for prospective sample-size determination remain comparatively limited.

A fundamental difficulty is that the required RCT sample size depends on the comparability of ECs with RCT controls, yet RCT controls are unavailable at the planning stage. Existing approaches address this difficulty in several ways. Bayesian methods characterize borrowing through the effective sample size induced by a prior or borrowing model \citep{zhang2022bayesian,bi2023beats,chen2024sequential,tian2025beam}. Frequentist approaches require specification of a sampling propensity score function that characterizes covariate shift between the RCT and ECs \citep{gao2025designing}, which can be difficult to elicit before trial enrollment. Other frequentist procedures use interim RCT data to adapt the amount of borrowing or reassess sample size \citep{guo2024adaptive,kojima2026sample}. Thus, there remains a need for a prospective sample-size method for HCTs that requires only a small number of interpretable design inputs, does not rely on access to interim data, and explicitly accounts for EC comparability.

We address this problem through a ``5+3'' parameterization for sample-size determination based on an inverse probability weighting (IPW) estimator of the average treatment effect. The first five inputs coincide with those commonly used in conventional RCT planning: the treatment effect, treatment-allocation proportion, type I error level, desired power, and an outcome-distribution parameter, such as the variance for a continuous outcome or the control response rate for a binary outcome. We then introduce three scalar parameters specific to HCTs: the number of outcome-drift-free ECs, an overlap coefficient describing the similarity between the covariate distributions of the RCT and ECs, and a correlation coefficient describing the association between the sampling mechanism and the control potential outcome. Under the proposed working models, these parameters determine the asymptotic variance of the IPW estimator and hence the required RCT sample size.
{We summarize the proposed 5+3 design framework in Figure~\ref{fig:diagram}, which provides a schematic overview of the required design inputs and their roles in sample-size determination.}

\begin{figure}[t]
    \centering
    \includegraphics[width=\linewidth]{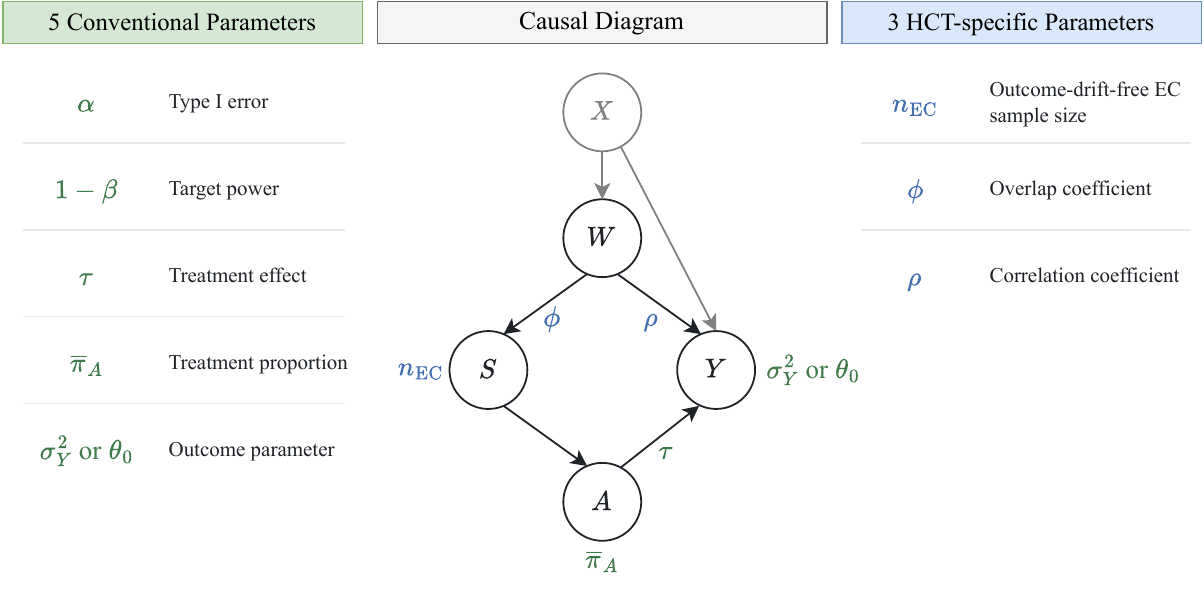}
    \caption{Schematic illustration of the proposed 5+3 design framework.}
    \label{fig:diagram}
\end{figure}

Our approach uses the \textit{linear sampling propensity score}, defined as the logit of the probability of entering the RCT under a logistic sampling model, as the core design quantity. The sampling propensity score is theoretically the coarsest balancing score, reducing multivariable covariate differences between the RCT and ECs to a single scalar while retaining the information needed for covariate balance. Analogous to the linear treatment propensity score of \citet{rubin1992characterizing}, we assume that its logit follows a normal distribution.
Our work is most closely related to the sample-size framework of \citet{liu2025sample} for observational studies, but differs in three important respects. {First, extending their framework to HCTs requires additional development because treatment effects are estimated by combining randomized and external data, introducing HCT-specific identification assumptions and a distinct weighting structure. Second, we refine the numerical calibration of the propensity score distribution. \citet{liu2025sample} use an auxiliary Beta distribution to obtain a Normal approximation for the linear propensity score, which may not exactly preserve the prespecified design parameters. We instead use their approximate values to initialize a direct Normal calibration, ensuring that the final distribution matches the prespecified marginal RCT proportion and overlap coefficient. Third, we derive separate sample-size formulas for continuous and binary outcomes, with the latter directly incorporating the Bernoulli mean--variance relationship rather than relying on a continuous-outcome approximation.} We compare related work on sample size calculation in Table~\ref{tab:literature-comparison}.

The main contributions are threefold. First, we formulate a prospective sample-size framework for HCTs in terms of five conventional RCT design parameters and three interpretable parameters characterizing EC comparability. Second, we establish the asymptotic distribution of the IPW estimator and show that its variance, for both continuous and binary outcomes, is determined by these design parameters under the proposed models. Third, we evaluate the finite-sample performance of the proposed method through simulations and illustrate its use in a real clinical application. Reproducible code is available at \url{github.com/ke-zhu/hct-sample-size}, and an implementation is provided in the \texttt{hctdesign} R package at \url{github.com/ke-zhu/hctdesign}.

\begin{table}[t]
\centering
\caption{Comparison of existing design and sample-size approaches related to HCTs.}
\label{tab:literature-comparison}
\small
\setlength{\tabcolsep}{5pt}
\renewcommand{\arraystretch}{1.15}

\begin{tabularx}{\textwidth}{
>{\raggedright\arraybackslash}p{3.4cm}
>{\raggedright\arraybackslash}p{3.4cm}
>{\raggedright\arraybackslash}p{2.5cm}
>{\raggedright\arraybackslash}X
}
\toprule
Literature & Setting & Design stage & Required inputs \\
\midrule

\citet{zhang2022bayesian};
\citet{bi2023beats};
\citet{chen2024sequential};
\citet{tian2025beam}
&
HCT
&
Prospective / adaptive
&
Prior/borrowing model / interim RCT data
\\
\addlinespace[3pt]

\citet{guo2024adaptive};
\citet{kojima2026sample}
&
HCT
&
Adaptive
&
Interim RCT data
\\
\addlinespace[3pt]

\citet{gao2025designing}
&
HCT; SAT+EC
&
Prospective
&
Nuisance functions
\\
\addlinespace[3pt]

\citet{liu2025sample}
&
Observational study
&
Prospective
&
Scalar parameters
\\

\midrule

Proposed method
&
HCT
&
Prospective
&
5 conventional + 3 HCT-specific scalar parameters
\\

\bottomrule
\end{tabularx}
\end{table}

\section{Hybrid Controlled Trials}\label{sec:pre}

Consider an HCT with total sample size $n$, consisting of an RCT with indicator $S=1$ and sample size $\nrct$, and an EC dataset with indicator $S=0$ and sample size $\nec$. The RCT population is regarded as the target population. Let $Y(1)$ and $Y(0)$ denote the potential outcomes under treatment $a=1$ and control $a=0$, respectively. The estimand of interest is the average treatment effect (ATE) in the RCT population,
$$
\tau=\theta_1-\theta_0,
\quad
\theta_a=\mathbb{E}\{Y(a)\mid S=1\}.
$$

Within the RCT, treatment is randomized, with $n_1$ patients assigned to $A=1$ and $n_0$ patients assigned to $A=0$, and a known treatment allocation proportion $\bpA=n_1/\nrct$.
In the EC dataset, all patients receive control, so $A=0$ when $S=0$. 
Let $\pS(x)=\Pr(S=1\mid X=x)$ denote the sampling propensity score \citep{tipton2013improving}, which is typically unknown in practice and needs to be estimated. Let $\bar{\pi}_S=\Pr(S=1)=\nrct/n$ denote the RCT proportion.
Let $Y$ denote the observed outcome. 
Throughout, a subscript $i$ denotes a quantity evaluated for unit $i$, while the corresponding notation without $i$ denotes its generic population counterpart.
We impose the following identifiability assumptions.

\begin{assumption}[Identifiability assumptions for RCT]
\label{ass:rct}
For units with $S=1$, the following hold: (i) Stable Unit Treatment Value Assumption (SUTVA) \citep{Rubin1980}: $Y=Y(A)$. (ii) Unconfoundedness of treatment: $\{Y(0),Y(1)\}\perp A \mid X,S=1$. (iii) Positivity of treatment: $\bpA\in(0,1)$ is known by design.
\end{assumption}

\begin{assumption}[SUTVA for EC]
For units with $S=0$, $Y=Y(0)$.
\end{assumption}

\begin{assumption}[Conditional mean exchangeability]
\label{ass:ec}
$
\mathbb{E}\{Y(0)\mid S=1,X\}=\mathbb{E}\{Y(0)\mid S=0,X\}.
$
\end{assumption}

\begin{assumption}[Positivity of sampling]\label{ass:pi}
$\pS(x)>0$ for all $x$ such that $f_{X}(x) > 0$.
\end{assumption}

Assumption~\ref{ass:rct} is typically guaranteed by the design of the RCT. Assumption~\ref{ass:ec} rules out \textit{outcome drift} in the ECs and could be relaxed through selective borrowing to allow partial violations \citep{gao2025improving,zhu2025enhancing,liu2025robust}. Assumption~\ref{ass:pi} can be satisfied by pre-restricting ECs to individuals who meet the eligibility criteria of the RCT.

Under these assumptions, we consider the inverse probability weighting (IPW) estimator \citep{Valancius2024}, which addresses the \textit{covariate shift} between RCT and EC:
$$
\hat{\tau}=\hat{\theta}_1-\hat{\theta}_0,\quad \hat{\theta}_a
=\frac{\sum_{i=1}^n w_{a,i}Y_i}{\sum_{i=1}^n w_{a,i}},\quad
w_{1,i}=\frac{S_iA_i}{\bpA},\quad w_{0,i}
=\frac{\pi_S(X_i)(1-A_i)}{1-\bpA\pi_S(X_i)}.
$$
For the control group,
$1-\bpA\pi_S(X)=\Pr(A=0\mid X)$ is the probability of being observed as a control, whereas the numerator $\pi_S(X)=\Pr(S=1\mid X)$ targets the RCT population. Thus, $w_0$ reweights the pooled RCT and ECs toward the RCT population, while $w_1$ accounts for treatment randomization within the RCT. 
{Notably, because of the presence of RCT controls, \(w_0\leq (1-\bpA)^{-1}\), with equality when \(\pi_S(X)=1\). Therefore, \(w_0\) is inherently bounded and cannot produce arbitrarily extreme weights.}
The normalization by the sum of weights yields a Hájek-type IPW estimator \citep{khan2023adaptive}.
In this paper, we focus on the design stage. For a practical workflow for the analysis of HCTs, we refer readers to \cite{zhu2026robust}.

\begin{remark}[IPW estimator with estimated sampling propensity score]
{At the design stage, we conceptually use the IPW estimator with the oracle sampling propensity score to obtain a tractable sample-size calculation. At the analysis stage, however, $\pS(\cdot)$ would be estimated from the pooled RCT and EC data. Using an estimated rather than the true propensity score can reduce the asymptotic variance of IPW estimators, but the corresponding variance expression involves additional nuisance quantities that are more difficult to specify at the design stage \citep{lunceford2004stratification}. If one wishes to account for propensity score estimation at the design stage, the variance can be refined through the corresponding influence-function correction or evaluated by simulation under the working models. In our simulations, the empirical power using an estimated sampling propensity score remains close to the theoretical power based on the oracle score; see Figure~\ref{fig:simulation-power-rho03}.}
\end{remark}

\begin{remark}[Efficient estimators]
Semiparametric efficient estimators may be used at the analysis stage, including RCT-only covariate adjustment \citep[e.g.,][]{bannick2025general,liu2025coadvise}, prognostic adjustment \citep{schuler2022increasing}, and EC-borrowing estimators such as augmented IPW (AIPW) \citep{li2023improving}, targeted maximum likelihood estimation (TMLE) \citep{Valancius2024}, and augmented calibration weighting (ACW) \citep{gao2025improving}. These approaches may yield greater power than that projected by the IPW-based calculation. However, characterizing their asymptotic power requires additional information about the outcome model \citep{schuler2022designing,gao2025designing}. By focusing on the IPW estimator, our framework provides a complementary approach for information-limited settings with simpler sample-size calculations. This reflects a trade-off among information requirements, simplicity, and statistical efficiency, and practitioners may choose the approach that best matches the information available at the design stage.
\end{remark}

\section{Power and Sample Size Calculations}

\subsection{Power Formula}

We first establish the asymptotic distribution of the IPW estimator for HCTs.

\begin{theorem}[Asymptotic normality of the IPW estimator]
\label{thm:ipw_asymp}
Suppose Assumptions~\ref{ass:rct}--\ref{ass:pi} hold. In addition, assume that
(i) $\mathbb{E}\{Y(a)^2\mid S=1\}<\infty$ for $a=0,1$; and
(ii) for the control potential outcome, the conditional second moment is transportable across the RCT and EC samples, in the sense that $\mathbb{E}\left[
\{Y(0)-\theta_0\}^2\mid X,S=1
\right]=
\mathbb{E}\left[
\{Y(0)-\theta_0\}^2\mid X,S=0
\right]$.
Then
\[
\sqrt{n}(\hat{\tau}-\tau)
\overset{d}{\longrightarrow}
N(0,V),
\qquad
V=V_1+V_0,
\]
where
\begin{equation}
\label{eq:var}
V_1
=
\frac{\mathbb{V}\{Y(1)\mid S=1\}}
{\bpA\bar{\pi}_S},
\quad
V_0
=
\frac{1}{\bar{\pi}_S^2}
\mathbb{E}\left[
\frac{\pS(X)^2}{1-\bpA\pS(X)}
\{Y(0)-\theta_0\}^2
\right].
\end{equation}
\end{theorem}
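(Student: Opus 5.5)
Since each $\hat\theta_a$ is a ratio of sample means, I would prove the result with the standard M-estimation (Hájek ratio) argument.

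\emph{Step 1: setup.} I treat the pooled sample $(S_i,A_i,X_i,Y_i)$, $i=1,\dots,n$, as i.i.d.\ draws from the joint distribution with $\Pr(S=1)=\bar\pi_S$ and $\Pr(A=1\mid S=1,X)=\bpA$; the fixed-$n_1$ allocation gives the same limit. Write $\hat\theta_a-\theta_a=\{n^{-1}\sum_i w_{a,i}(Y_i-\theta_a)\}/\{n^{-1}\sum_i w_{a,i}\}$. The denominators converge in probability by the law of large numbers:
\[
\mathbb{E}(w_1)=\mathbb{E}(SA)/\bpA=\bar\pi_S,
\qquad
\mathbb{E}(w_0)=\mathbb{E}\Bigl[\frac{\pS(X)\,\mathbb{E}(1-A\mid X)}{1-\bpA\pS(X)}\Bigr]=\mathbb{E}\{\pS(X)\}=\bar\pi_S,
\]
using $\mathbb{E}(1-A\mid X)=1-\bpA\pS(X)$. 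The weights are bounded, with $w_0\le(1-\bpA)^{-1}$, and the second moments are finite by (i) and (ii). So by Slutsky, $\sqrt n(\hat\tau-\tau)=\bar\pi_S^{-1}n^{-1/2}\sum_i\psi_i+o_p(1)$ with
\[
\psi=w_1(Y-\theta_1)-w_0(Y-\theta_0).
\]

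\emph{Step 2: mean zero.} For the treated term, $\mathbb{E}\{w_1(Y-\theta_1)\}=\mathbb{E}[S\,\mathbb{E}\{Y(1)-\theta_1\mid S=1,A=1\}]=0$ by Assumption~\ref{ass:rct}.

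For the control term, conditional on $X$, the event $A=0$ splits into $\{S=1,A=0\}$, with probability $\pS(1-\bpA)$, and $\{S=0\}$, with probability $1-\pS$. Randomization and Assumption~\ref{ass:ec} give the same conditional mean $m_0(X)=\mathbb{E}\{Y(0)\mid S=1,X\}$ on both parts. Hence
\[
\mathbb{E}\{(1-A)(Y-\theta_0)\mid X\}=\{1-\bpA\pS(X)\}\{m_0(X)-\theta_0\},
\]
and therefore $\mathbb{E}\{w_0(Y-\theta_0)\}=\mathbb{E}[\pS(X)\{m_0(X)-\theta_0\}]=\bar\pi_S\,\mathbb{E}\{m_0(X)-\theta_0\mid S=1\}=0$. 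The CLT then yields normality with variance $\bar\pi_S^{-2}\mathbb{E}(\psi^2)$.

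\emph{Step 3: variance.} Since $w_1w_0\propto A(1-A)=0$, the cross term vanishes and
\[
\mathbb{E}(\psi^2)=\mathbb{E}\{w_1^2(Y-\theta_1)^2\}+\mathbb{E}\{w_0^2(Y-\theta_0)^2\}.
\]
The first piece is $\mathbb{E}\{SA(Y(1)-\theta_1)^2\}/\bpA^2=\bar\pi_S\,\mathbb{V}\{Y(1)\mid S=1\}/\bpA$. Dividing by $\bar\pi_S^2$ gives $V_1$.

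The second piece is the main obstacle, and it is exactly where assumption (ii) enters. Conditioning on $X$ and using the same decomposition of $\{A=0\}$ as in Step 2 gives
\[
\mathbb{E}\{(1-A)(Y-\theta_0)^2\mid X\}=\{1-\bpA\pS(X)\}\,\sigma^2(X),
\qquad
\sigma^2(X)=\mathbb{E}[\{Y(0)-\theta_0\}^2\mid X,S=1].
\]
This step needs the EC conditional second moment (centered at $\theta_0$) to equal the RCT one, which is precisely (ii), together with randomization within the RCT. Then
\[
\mathbb{E}\{w_0^2(Y-\theta_0)^2\}=\mathbb{E}\Bigl[\frac{\pS(X)^2}{1-\bpA\pS(X)}\sigma^2(X)\Bigr].
\]
Finally, I must confirm that $\mathbb{E}\{\cdot\}$ in \eqref{eq:var} is over the pooled population with $Y(0)$ defined for all units. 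By (ii), $\mathbb{E}[\{Y(0)-\theta_0\}^2\mid X]=\sigma^2(X)$ in the pooled population, so the tower property gives the stated $V_0$ after dividing by $\bar\pi_S^2$. Summing, $V=V_1+V_0$.
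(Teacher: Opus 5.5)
Your proposal is correct and follows essentially the same route as the paper's proof: the Hájek linearization with influence functions $w_a(Y-\theta_a)/\bar{\pi}_S$, conditioning on $X$ to split $\{A=0\}$ into RCT controls and ECs (with condition (ii) entering exactly there), and $w_1w_0=0$ to remove the cross term. One caution: drop the parenthetical claim that fixed-$n_1$ allocation gives the same limit, because the paper's proof also works only in the i.i.d.\ Bernoulli-assignment model, and under complete randomization the asymptotic variance is in general strictly smaller, by an amount proportional to $[\mathbb{E}\{w_0(Y-\theta_0)\mid S=1\}]^2$, which is nonzero whenever $\pi_S(X)/\{1-\bar{\pi}_A\pi_S(X)\}$ covaries with $\mathbb{E}\{Y(0)\mid X\}$ in the RCT population.
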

\begin{remark}
Condition (i) is a standard moment condition for asymptotic normality and finite variance. {Condition (ii) is introduced to obtain a parsimonious expression for $V_0$ that can be parameterized using quantities that are interpretable and practically specifiable at the design stage. Although ECs may be more heterogeneous than RCT controls in practice, Condition (ii) is imposed only on the outcome-drift-free ECs and concerns the conditional second moment given covariates, rather than the marginal outcome variance. Moreover, Condition (ii) is not required for identification or consistency of the IPW estimator. If it is relaxed, the same general variance calculation remains applicable, but $V_0$ depends on additional data-source-specific conditional second-moment quantities, which must then be specified as additional design parameters.}
\end{remark}

\begin{remark}[Special cases]
The proposed framework includes two important special cases. When $\bpA=1$, all trial participants are assigned to treatment, corresponding to a single-arm trial (SAT) with ECs. In this case, $\tau$ is the average treatment effect on the treated (ATT), and our framework reduces to the ATT sample-size calculation of \citet{liu2025sample}. When $\pS(X)\equiv 1$, no ECs are included, and our framework reduces to the conventional RCT sample-size calculation.
\end{remark}

Based on Theorem~\ref{thm:ipw_asymp}, Wald-type inference can be performed with a consistent variance estimator. For the one-sided test $H_0:\tau\leq 0$ versus $H_1:\tau>0$ at significance level $\alpha$, the asymptotic power is
\begin{equation}
\label{eq:power}
1-\beta
=
\Phi\left(
\sqrt{\frac{\nrct+\nec}{V}}\,\tau
-
z_{1-\alpha}
\right),
\end{equation}
where $z_{1-\alpha}$ denotes the $(1-\alpha)$ quantile of the standard normal distribution. The required RCT sample size $\nrct$ can be obtained numerically by solving \eqref{eq:power}.

\begin{remark}[Asymptotic approximation in small trials]
{The power formula \eqref{eq:power} relies on an asymptotic approximation, whereas HCTs are particularly attractive in rare-disease and pediatric settings where $\nrct$ may be small. With few randomized patients, the normal approximation and plug-in variance estimation may be inaccurate, and the Wald test may fail to maintain the nominal type I error. We therefore recommend evaluating the operating characteristics of any candidate design by simulation under the working models before finalizing $\nrct$. Our simulations consider $\nrct\geq100$; for smaller trials, safeguards such as $t$-distribution critical values or a modest variance inflation factor may improve finite-sample performance. Alternatively, Fisher randomization test can provide finite-sample exact type I error control under the sharp null in HCTs \citep{zhu2025enhancing,liu2025robust}.}
\end{remark}

The key challenge is to determine $V$ at the design stage, when only limited information may be available. For a conventional RCT, sample size calculations are typically characterized by \textit{five parameters}: the treatment effect $\tau$, treatment allocation proportion $\bpA$, type I error level $\alpha$, desired power $1-\beta$, and an outcome parameter, such as the variance $\sigma_Y^2$ for a continuous outcome or the control response rate $\theta_0$ for a binary outcome.
However, HCTs require additional information because the precision gained from ECs depends not only on the number of ECs, but also on their comparability with the RCT population. 

In the following subsections, we introduce \textit{three additional scalar parameters} that summarize these HCT-specific features and, together with the five standard parameters, are sufficient to determine $V$ for power and sample size calculations. We refer to this framework as the \textit{5+3 design}.

\subsection{Outcome-Drift-Free EC Sample Size}

We have so far used $\nec$ to denote the EC sample size. Theorem~\ref{thm:ipw_asymp}, however, requires the ECs contributing to the IPW estimator to satisfy the conditional mean exchangeability condition in Assumption~\ref{ass:ec}. We therefore interpret $\nec$ more specifically as the number of \textit{outcome-drift-free} ECs, that is, ECs for which no residual outcome drift remains after conditioning on $X$. Let $\tilde n_{\rm EC}$ denote the total number of ECs available prior to accounting for possible outcome drift. In general, we have $\nec \leq \tilde n_{\rm EC}$, with equality when all available ECs are assumed to satisfy Assumption~\ref{ass:ec}.

At the design stage, when no RCT control data are available to directly assess potential outcome drift, $\nec$ may be specified by discounting the available EC sample size,
\[
\nec=c\,\tilde n_{\rm EC},
\qquad c\in[0,1],
\]
where $c$ represents the anticipated proportion of ECs that satisfy the outcome-drift-free working assumption. Sample size calculations may be conducted over a range of plausible values of $c$ to assess sensitivity to this specification.

\begin{remark}[Choice of $\nec$ and residual outcome drift]
{Because only limited information and no concurrent RCT control data are available at the design stage, the proposed framework necessarily relies on working assumptions about the anticipated number of outcome-drift-free ECs and their satisfaction of Assumption~\ref{ass:ec}. Consequently, misspecification of $c$ or residual outcome drift among ECs assumed to be outcome-drift-free may lead to biased estimation, inflated type I error, and inaccurate power calculations. Several strategies can mitigate these limitations: (i) when pilot or interim RCT control data become available, a prespecified selective-borrowing procedure may be used to identify outcome-drift-free ECs \citep{gao2025improving,zhu2025enhancing,liu2025robust}, with selection consistency providing asymptotic protection by excluding ECs with outcome drift; (ii) design-stage simulations can introduce plausible levels of outcome drift to evaluate type I error and power and identify tipping points at which the operating characteristics deteriorate; and (iii) the required $\nrct$ can be evaluated over a range of $c$ values, including $c=0$ as the conventional-RCT benchmark, to assess sensitivity to the assumed proportion of outcome-drift-free ECs.}
\end{remark}

\subsection{Overlap Coefficient}

The second HCT-specific parameter characterizes covariate overlap between the RCT and EC populations. This overlap is reflected by the sampling propensity score $\pS(X)$, but its distribution is generally difficult to specify at the design stage. We therefore use a scalar overlap coefficient to summarize the information needed for sample size calculation.

Specifically, let $f_s(u)$ denote the density of $\pS(X)$ conditional on $S=s$, for $s=0,1$, corresponding to the sampling propensity score distributions in the EC and RCT populations, respectively. We adopt the Bhattacharyya coefficient \citep{bhattacharyya1943measure} as a scalar measure of overlap between $f_0(u)$ and $f_1(u)$:
\[
\phi
=
\int_0^1 \{f_0(u)f_1(u)\}^{1/2}\,\mathrm{d}u,
\qquad 0\leq\phi\leq1.
\]
We refer to $\phi$ as the \textit{overlap coefficient}.
Larger values of $\phi$ indicate greater overlap, with $\phi=1$ corresponding to identical distributions.

To translate this scalar measure of overlap into the distribution of $\pS(X)$, we impose a logistic model for the sampling propensity score. Specifically, we consider
\[
\logit\{\pS(X)\}=\widetilde X^{\T}\gamma,
\qquad
\widetilde X^{\T}=(1,X^{\T}),
\]
where $\logit(u)=\log\{u/(1-u)\}$ and $\expit(v)=\{1+\exp(-v)\}^{-1}$ denote the logit and inverse-logit functions, respectively. Analogous to the linear treatment propensity score \citep{rubin1992characterizing}, we define $W=\widetilde X^{\T}\gamma$ as the \textit{linear sampling propensity score} and assume $W\sim N(\mu_W,\sigma_W^2)$.

By Bayes' rule and $\pS(X)=\expit(W)$, we have
\[
\phi
=
\frac{
\mathbb{E}\!\left[
\{\pS(X)(1-\pS(X))\}^{1/2}
\right]
}{
\{\bar{\pi}_S(1-\bar{\pi}_S)\}^{1/2}
}
=
\frac{
\mathbb{E}\!\left[
\{\expit(W)(1-\expit(W))\}^{1/2}
\right]
}{
\{\bar{\pi}_S(1-\bar{\pi}_S)\}^{1/2}
}.
\]
Together with $\bar{\pi}_S=\mathbb{E}\{\pS(X)\}=\mathbb{E}\{\expit(W)\}$, these two equations determine $(\mu_W,\sigma_W^2)$ from $(\bar{\pi}_S,\phi)$ and can be solved numerically. 
In our numerical implementation, we first use the Beta-distribution-based approximation to the logit-normal sampling propensity score distribution in \citet{liu2025sample} to obtain initial values for $(\mu_W,\sigma_W^2)$. We then numerically refine these parameters to match the target $(\bar{\pi}_S,\phi)$ directly. This refinement retains the approximation as an efficient starting point while improving agreement between the resulting distribution of $W$ and the prespecified design parameters.

When solving \eqref{eq:power} for the required RCT sample size, each candidate $\nrct$ determines $\bar{\pi}_S=\nrct/(\nrct+\nec)$. Given a prespecified overlap coefficient $\phi$, we can then obtain $(\mu_W,\sigma_W^2)$ and hence the distribution of $\pS(X)$ required to evaluate $V_0$. Thus, $\phi$ provides a scalar summary of the covariate overlap needed for sample size calculation, without requiring specification of the full multivariate covariate distributions in the RCT and EC populations. Figure~\ref{fig:phi-overlap} illustrates the interpretation of $\phi$ through the similarity of the sampling propensity score distributions in the RCT and EC populations.

\begin{figure}[t]
    \centering
    \includegraphics[width=1\linewidth]{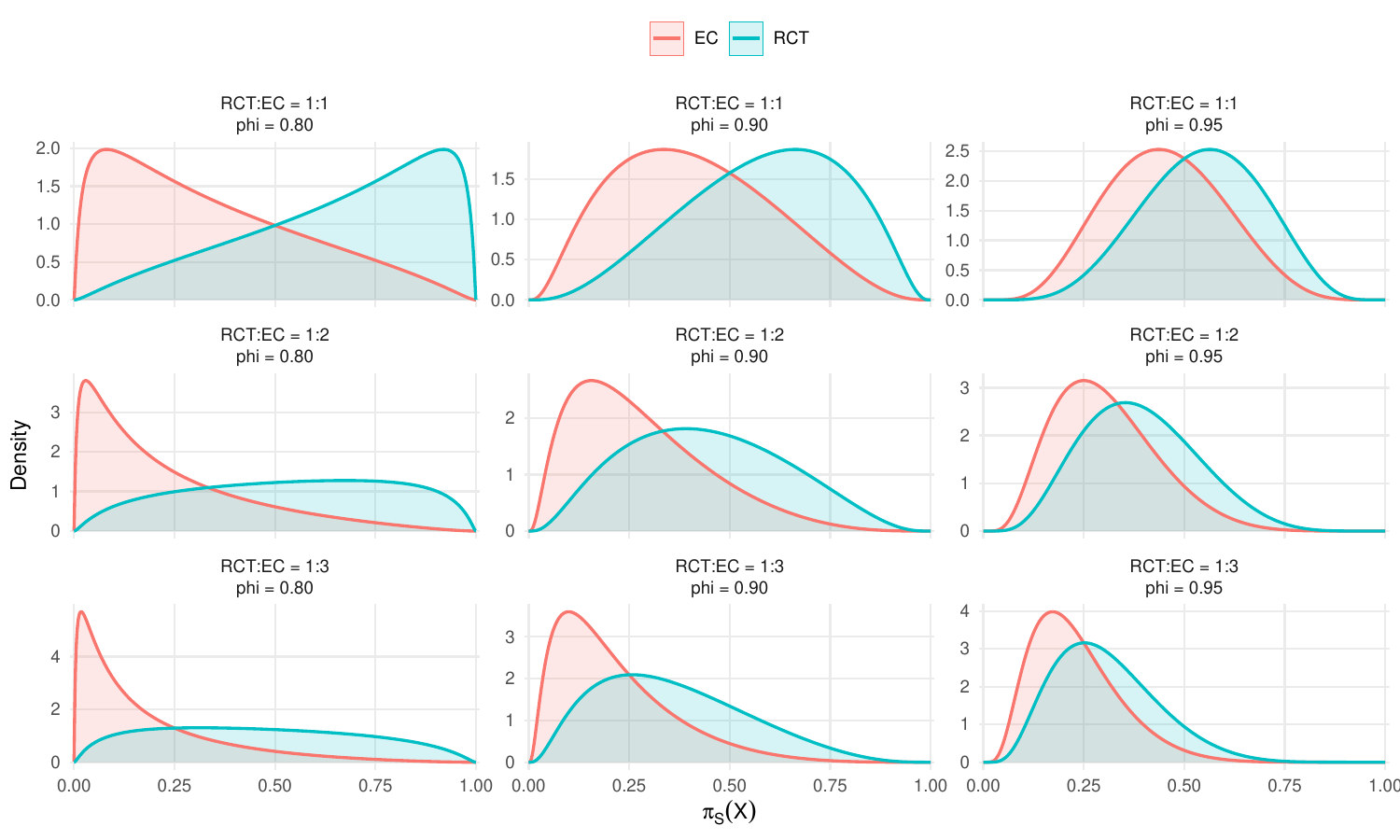}
    \caption{Illustration of the overlap coefficient $\phi$. Each panel shows the conditional distributions of the sampling propensity score $\pi_S(X)$ among RCT participants and ECs. When $\phi=1$, the two distributions coincide and degenerate to a common point mass at $\bpS$.}
    \label{fig:phi-overlap}
\end{figure}

\begin{remark}[Specification of $\phi$ and normality of $W$]
{Two limitations merit consideration. First, $\phi$ is specified at the design stage before RCT covariates are observed, and the required sample size can be sensitive to this choice. In the application of Section~\ref{sec:rd}, the operational RCT size increases from 140 to 220 as $\phi$ decreases from 1 (perfect overlap) to 0.80 (poor overlap), so an optimistic overlap assumption may leave the trial underpowered. Second, mapping $(\bpS,\phi)$ to the distribution of $\pS(X)$ relies on the normal working model for $W$, although our simulations suggest robustness to the mild nonnormality induced by mixed continuous and binary covariates. When individual-level EC covariates and information on the expected RCT baseline distribution are available, they can be used to inform both $\phi$ and a more realistic sampling propensity score distribution. Otherwise, we recommend evaluating sample size over a range of $\phi$ values and, when feasible, conducting a blinded reassessment after early RCT enrollment.}
\end{remark}

\subsection{Correlation Coefficient}

The third HCT-specific parameter characterizes the association between
the covariates and the control potential outcome that is relevant for
evaluating $V_0$. Because the IPW weight depends on $X$ only through
$\pS(X)=\expit(W)$, we summarize this association through the
one-dimensional linear sampling propensity score $W$.

Specifically, we define the \textit{correlation coefficient}
\[
\rho
=
\operatorname{cor}\{W,Y(0)\},
\qquad
-1\leq\rho\leq1.
\]
The parameter $\rho$ captures the correlation between the covariate direction that differentiates the RCT and EC populations and the control potential outcome. Thus, it depends not only on the prognostic strength of the covariates, but also on how closely the prognostic direction aligns with the covariate differences between the two populations. In the limiting case of perfect overlap ($\phi=1$), $W$ degenerates to a constant, and we set $\rho=0$ by convention.

\begin{remark}[Specification of $\rho$]
Among the three HCT-specific inputs, $\rho$ may be less familiar because it characterizes the correlation between the linear sampling propensity score $W$ and the control potential outcome. {For binary outcomes, $\rho$ has a universal upper bound $|\rho|\le\sqrt{2/\pi}\approx0.798$ because a binary variable cannot achieve perfect linear correlation with a normally distributed continuous variable \citep{gradstein1986maximal}.} For continuous outcomes, the magnitude of $\rho$ is instead bounded by the prognostic strength of the covariates, as $W$ is a linear function of $X$; thus, information from existing prognostic models, such as the multiple correlation coefficient or $R^2$, can inform a plausible upper bound for $\rho^2$. In practice, we recommend evaluating the required sample size over a range of plausible values of $\rho$.
\end{remark}

We next show how $\rho$, together with the conventional outcome distribution parameter, determines $V_0$ for binary and continuous outcomes.

\subsection{Variance Calculation for Binary Outcome}

For a binary outcome, $\theta_0=\mathbb{E}\{Y(0)\mid S=1\}$
denotes the control response rate in the RCT population. We posit the logistic model
\begin{equation}
\label{eq:Ybin}
\logit\left[
\mathbb{E}\{Y(0)\mid W\}
\right]
=
a+bW .
\end{equation}
The following result expresses $V_0$ in terms of the design parameters.
\begin{theorem}[Control-arm variance for binary outcomes]
\label{thm:binary_variance}
Suppose \eqref{eq:Ybin} holds. Then
\begin{equation}
\label{eq:V0_binary}
V_0
=
\frac{1}{\bar{\pi}_S^2}
\mathbb{E}\left[
\left\{
(1-2\theta_0)\expit(a+bW)+\theta_0^2
\right\}
\frac{\expit(W)^2}
{1-\bpA\expit(W)}
\right],
\end{equation}
where $(a,b)$ are determined by the design parameters $(\theta_0,\rho)$ through
\begin{equation}
\label{eq:binary_mean}
\theta_0
=
\frac{1}{\bar{\pi}_S}
\mathbb{E}\left[
\expit(W)\expit(a+bW)
\right],
\end{equation}
and
\begin{equation}
\label{eq:binary_corr}
\rho
=
\frac{
\mathbb{E}\left[
(W-\mu_W)\expit(a+bW)
\right]
}{
\sigma_W
\left[
\mathbb{E}\{\expit(a+bW)\}
\{1-\mathbb{E}\{\expit(a+bW)\}\}
\right]^{1/2}
}.
\end{equation}
\end{theorem}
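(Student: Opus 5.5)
Starting from the expression for $V_0$ in \eqref{eq:var} of Theorem~\ref{thm:ipw_asymp}, I will condition on $X$ and use that the weight $\pS(X)^2/\{1-\bpA\pS(X)\}$ depends on $X$ only through $W$, since $\pS(X)=\expit(W)$. Condition (ii) of Theorem~\ref{thm:ipw_asymp} together with Assumption~\ref{ass:ec} shows that the conditional second moment $\mathbb{E}[\{Y(0)-\theta_0\}^2\mid X,S]$ does not depend on $S$. It therefore equals $\mathbb{E}[\{Y(0)-\theta_0\}^2\mid X]$ in the pooled population, so we can write
\[
V_0=\frac{1}{\bar{\pi}_S^2}\mathbb{E}\left[\frac{\expit(W)^2}{1-\bpA\expit(W)}\,\mathbb{E}\left[\{Y(0)-\theta_0\}^2\mid W\right]\right],
\]
after a further tower-property step from $X$ to $W$. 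The working model \eqref{eq:Ybin} is interpreted as specifying $\mathbb{E}\{Y(0)\mid W\}$ in the pooled population, consistent with this transportability.

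For binary $Y(0)$ we have $Y(0)^2=Y(0)$, so with $m(W)=\expit(a+bW)$,
\[
\mathbb{E}[\{Y(0)-\theta_0\}^2\mid W]=m(W)-2\theta_0 m(W)+\theta_0^2=(1-2\theta_0)m(W)+\theta_0^2 .
\]
Substituting this gives \eqref{eq:V0_binary} directly.

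It remains to justify the calibration equations. For \eqref{eq:binary_mean}, write $\theta_0=\mathbb{E}\{Y(0)\mid S=1\}=\mathbb{E}\{S\,Y(0)\}/\bar{\pi}_S$. Apply the tower property conditioning on $X$. By Assumption~\ref{ass:ec}, $\mathbb{E}\{Y(0)\mid X,S=1\}=\mathbb{E}\{Y(0)\mid X\}$, which gives $\theta_0=\mathbb{E}[\pS(X)\,\mathbb{E}\{Y(0)\mid X\}]/\bar{\pi}_S$. Conditioning once more on $W$ then yields $\mathbb{E}[\expit(W)m(W)]/\bar{\pi}_S$.

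For \eqref{eq:binary_corr}, start from the definition $\rho=\operatorname{cov}\{W,Y(0)\}/[\sigma_W\{\mathbb{V}Y(0)\}^{1/2}]$, with moments taken in the pooled population where $W\sim N(\mu_W,\sigma_W^2)$. The numerator is $\operatorname{cov}\{W,Y(0)\}=\mathbb{E}[(W-\mu_W)Y(0)]=\mathbb{E}[(W-\mu_W)m(W)]$. Because $Y(0)$ is Bernoulli with mean $p=\mathbb{E}\{m(W)\}$, its variance is $p(1-p)$.

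The step needing most care is this marginal-versus-conditional bookkeeping. The working model and $\rho$ must refer to the pooled distribution of $(W,Y(0))$, while $\theta_0$ refers to the RCT population. I will state explicitly that $\mathbb{E}$ without conditioning is over the pooled HCT population, matching its use in Theorem~\ref{thm:ipw_asymp}.

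Finally, I will show that $(a,b)$ is well determined, at least remarking that it exists and is unique when $(\theta_0,\rho)$ are feasible. For fixed $b$, the right-hand side of \eqref{eq:binary_mean} is strictly increasing in $a$, so it defines $a(b)$ uniquely. Then $\rho$ as a function of $b$ is continuous, has the sign of $b$, and vanishes at $b=0$. I will assert that it can be solved numerically, without attempting a full proof of monotonicity in $b$.
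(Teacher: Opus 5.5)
Your proposal is correct and follows essentially the same route as the paper: you substitute $\mathbb{E}[\{Y(0)-\theta_0\}^2\mid W]=(1-2\theta_0)\expit(a+bW)+\theta_0^2$ (from $Y(0)^2=Y(0)$) into \eqref{eq:var}, and you obtain the two calibration equations from iterated expectations and the Bernoulli variance in the pooled population. The differences are cosmetic. You derive \eqref{eq:binary_mean} by conditioning first on $X$ and then on $W$, rather than invoking the balancing property $S\perp X\mid W$. Your appeal to condition (ii) is redundant, since \eqref{eq:var} is already a pooled expectation. Your existence remark for $a(b)$ goes slightly beyond what the paper states.
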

\begin{remark}[Perfect overlap]
When $\phi=1$, the sampling propensity score is constant,
$\pS(X)\equiv\bar{\pi}_S$, so that $W$ is degenerate and
$\sigma_W=0$. In this case, we set $\rho=0$ by convention,
$\mathbb{E}\{Y(0)\mid W\}=\theta_0$, and
\[
V_0
=
\frac{\theta_0(1-\theta_0)}
{1-\bpA\bar{\pi}_S}.
\]
\end{remark}

The expectation in \eqref{eq:V0_binary} is taken over
$W\sim N(\mu_W,\sigma_W^2)$ and can be evaluated numerically.
For the treatment arm,
\[
V_1
=
\frac{\theta_1(1-\theta_1)}
{\bpA\bar{\pi}_S}.
\]
Therefore, for binary outcomes, $V=V_1+V_0$ is determined by the conventional design parameters $(\tau,\theta_0,\bpA)$, the three HCT-specific parameters $(\nec,\phi,\rho)$, and the candidate RCT sample size $\nrct$.

\subsection{Variance Calculation for Continuous Outcome}

For a continuous outcome, $\sigma_Y^2=\mathbb{V}\{Y(0)\mid S=1\}$ 
denotes the control outcome variance in the RCT population. We posit the homoscedastic linear model
\begin{equation}
\label{eq:Ycon}
Y(0)=a+bW+\varepsilon,
\qquad
\mathbb{E}(\varepsilon\mid W,S)=0,
\qquad
\mathbb{V}(\varepsilon\mid W,S)=\sigma_\varepsilon^2,
\end{equation}
which satisfies the conditional second-moment transportability condition in Theorem~\ref{thm:ipw_asymp}.
The following result expresses $V_0$ in terms of the design parameters.
\begin{theorem}[Control-arm variance for continuous outcomes]
\label{thm:continuous_variance}
Suppose \eqref{eq:Ycon} holds. Then
\begin{equation}
\label{eq:V0_continuous}
V_0
=
\frac{\sigma_Y^2\kappa}
{\bar{\pi}_S^2}
\mathbb{E}\left[
\left\{
(1-\rho^2)
+
\rho^2
\frac{(W-\mu_{W,1})^2}{\sigma_W^2}
\right\}
\frac{\expit(W)^2}
{1-\bpA\expit(W)}
\right],
\end{equation}
where 
\[
\mu_{W,1}=\bar{\pi}_S^{-1}\mathbb{E}\{\expit(W)W\},\quad
\sigma_{W,1}^2=\bar{\pi}_S^{-1}\mathbb{E}\{\expit(W)(W-\mu_{W,1})^2\},
\]
and
\[
\kappa
=
\frac{\sigma_W^2}
{\rho^2\sigma_{W,1}^2+(1-\rho^2)\sigma_W^2}.
\]
\end{theorem}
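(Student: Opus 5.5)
The plan is to start from the general expression for $V_0$ in \eqref{eq:var} of Theorem~\ref{thm:ipw_asymp} and reduce it, via the tower property, to an expectation over $W$ alone. The model \eqref{eq:Ycon} satisfies condition (ii) of that theorem, so the theorem applies. The weight factor $\pS(X)^2/\{1-\bpA\pS(X)\}$ depends on $X$ only through $W$, since $\pS(X)=\expit(W)$. Conditioning on $W$ therefore gives
\[
V_0=\frac{1}{\bar\pi_S^2}\mathbb{E}\left[\frac{\expit(W)^2}{1-\bpA\expit(W)}\,m(W)\right],
\qquad m(W)=\mathbb{E}\bigl[\{Y(0)-\theta_0\}^2\mid W\bigr],
\]
where the outer expectation is over the pooled population, in which $W\sim N(\mu_W,\sigma_W^2)$. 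It then remains to express $m(W)$ in terms of the design parameters.

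\textbf{Step 1: compute $m(W)$.} By \eqref{eq:Ycon}, $\mathbb{E}(\varepsilon\mid W,S)=0$ and $\mathbb{V}(\varepsilon\mid W,S)=\sigma_\varepsilon^2$ do not depend on $S$. Hence
\[
m(W)=(a+bW-\theta_0)^2+\sigma_\varepsilon^2 .
\]
Next I identify $\theta_0$. Using $\mathbb{E}(\varepsilon\mid W,S=1)=0$,
\[
\theta_0=\mathbb{E}\{Y(0)\mid S=1\}=a+b\,\mathbb{E}(W\mid S=1).
\]
By Bayes' rule with $\Pr(S=1\mid W)=\expit(W)$,
\[
\mathbb{E}(W\mid S=1)=\bar\pi_S^{-1}\mathbb{E}\{\expit(W)W\}=\mu_{W,1}.
\]
Therefore $m(W)=b^2(W-\mu_{W,1})^2+\sigma_\varepsilon^2$. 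The same Bayes argument gives $\mathbb{V}(W\mid S=1)=\sigma_{W,1}^2$.

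\textbf{Step 2: eliminate $(b,\sigma_\varepsilon^2)$ using $(\rho,\sigma_Y^2)$.} In the pooled population:
\begin{itemize}
\item $\operatorname{cov}\{W,Y(0)\}=b\sigma_W^2$, because $\varepsilon$ is mean-zero given $W$ and hence uncorrelated with $W$;
\item $\mathbb{V}\{Y(0)\}=T:=b^2\sigma_W^2+\sigma_\varepsilon^2$;
\item consequently $\rho^2=b^2\sigma_W^2/T$, which gives $b^2\sigma_W^2=\rho^2T$ and $\sigma_\varepsilon^2=(1-\rho^2)T$.
\end{itemize}
Within the RCT, homoscedasticity given $(W,S)$ yields
\[
\sigma_Y^2=b^2\sigma_{W,1}^2+\sigma_\varepsilon^2=T\left\{\rho^2\frac{\sigma_{W,1}^2}{\sigma_W^2}+(1-\rho^2)\right\},
\]
so $T=\sigma_Y^2\kappa$. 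Substituting these into $m(W)$ gives
\[
m(W)=\sigma_Y^2\kappa\left\{(1-\rho^2)+\rho^2\frac{(W-\mu_{W,1})^2}{\sigma_W^2}\right\}.
\]
Plugging this into the displayed form of $V_0$ yields \eqref{eq:V0_continuous}. Only $b^2$ appears, so the sign of $\rho$ is irrelevant.

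\textbf{Main obstacle.} The step needing most care is Step 2: keeping straight which moments are pooled and which are RCT-conditional. The correlation $\rho$ and the parameters $(\mu_W,\sigma_W^2)$ refer to the pooled distribution of $W$, whereas $\theta_0$ and $\sigma_Y^2$ refer to $S=1$. The factor $\kappa$ is exactly the conversion between the pooled outcome variance $T$ and the RCT variance $\sigma_Y^2$. I would verify the two reductions to $\mu_{W,1}$ and $\sigma_{W,1}^2$ explicitly from Bayes' rule, and use the $S$-invariance of the first two conditional moments of $\varepsilon$ to justify both $m(W)$ and the RCT variance decomposition.
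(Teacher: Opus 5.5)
Your proposal is correct and follows essentially the same route as the paper's proof. You identify $\theta_0=a+b\mu_{W,1}$, reduce the conditional second moment to $b^2(W-\mu_{W,1})^2+\sigma_\varepsilon^2$, solve for $(b^2,\sigma_\varepsilon^2)$ from the RCT variance decomposition and the pooled correlation, and substitute into \eqref{eq:var}; naming the pooled variance $T=b^2\sigma_W^2+\sigma_\varepsilon^2$ and showing $T=\sigma_Y^2\kappa$ is only a slightly tidier organization of the same algebra.
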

\begin{remark}[Interpretation of $\kappa$]
The parameter $\kappa$ captures the additional complexity induced by covariate shift in the linear sampling propensity score. Notably, $\kappa$ is a derived quantity rather than an additional HCT-specific parameter.
\end{remark}
\begin{remark}[Perfect overlap]
When $\phi=1$, the sampling propensity score is constant,
$\pS(X)\equiv\bar{\pi}_S$, so that $W$ is degenerate and
$\sigma_W=0$. In this case, we set $\rho=0$ and $\kappa=1$ by convention and evaluate
the control-arm variance directly as
\[
V_0
=
\frac{\sigma_Y^2}{1-\bpA\bar{\pi}_S}.
\]
\end{remark}
Under the conventional equal-variance assumption, $\mathbb{V}\{Y(1)\mid S=1\}=\mathbb{V}\{Y(0)\mid S=1\}=\sigma_Y^2$, the treatment-arm variance is
\[
V_1=\frac{\sigma_Y^2}{\bpA\bar{\pi}_S}.
\]
Therefore, for continuous outcomes, $V=V_1+V_0$ is determined by the conventional design parameters $(\tau,\sigma_Y^2,\bpA)$, the three HCT-specific parameters $(\nec,\phi,\rho)$, and the candidate RCT sample size $\nrct$.

\section{Simulation Studies}
\subsection{Simulation Design}

We assessed the finite-sample performance of the proposed power calculation using a fixed RCT target population and independently generated, non-nested EC populations. This design held the estimand, the ATE in the RCT population, fixed while varying only the EC population across overlap settings, enabling meaningful power comparisons across scenarios.

We generated five baseline covariates. The first three followed a multivariate normal distribution with mean zero, unit variances, and $\operatorname{cor}(X_j,X_k)=0.2^{|j-k|}$; the remaining two were independent Bernoulli covariates with prevalences 0.5 and 0.3. This defined the RCT target population. The EC covariate distribution was generated by exponential tilting, $f_{\rm EC}(x;\boldsymbol{\eta})=f_{\rm RCT}(x)\exp(-\boldsymbol{\eta}^{\T}x)/M(\boldsymbol{\eta})$, where $M(\boldsymbol{\eta})=\mathbb{E}_{\rm RCT}\{\exp(-\boldsymbol{\eta}^{\T}X)\}$. The resulting true linear sampling propensity score was $W=\logit(\bar\pi_S)+\log M(\boldsymbol{\eta})+\boldsymbol{\eta}^{\T}X$. For each setting, the magnitude and direction of $\boldsymbol{\eta}$ were calibrated to the target $\phi$ and $\rho$, respectively. Because this data-generating mechanism does not require $W$ to be exactly normal, it also evaluates robustness to mild model misspecification.

For continuous outcomes, we generated $Y(a)=a\tau+\boldsymbol{\beta}^{\T}\left\{X-(0,0,0,0.5,0.3)^{\T}\right\}+\varepsilon_a$, where $\varepsilon_a\sim N(0,0.6)$, with $0.6$ denoting the variance. We set $\boldsymbol{\beta}\propto(0.5,0.4,0.3,0.5,0.4)^{\T}$ and scaled it so that the prognostic component explained 40\% of the RCT outcome variance, yielding $\sigma_Y^2=1$. We set $\tau=0.30$. For binary outcomes, logistic models used the same covariate coefficients in both arms, with intercepts calibrated to RCT marginal response rates of 0.30 and 0.46, giving $\tau=0.16$. In both settings, EC outcomes followed the same conditional control outcome model as the RCT and were therefore outcome-drift-free by construction.

We considered $\nrct\in\{100,200,500\}$, fixed $\nec=300$, and assigned three-quarters of the RCT participants to treatment. The overlap/correlation settings
were $(\phi,\rho)=(1,0)$ and all combinations of
$\phi\in\{0.95,0.90,0.80\}$ and $\rho\in\{0.1,0.3\}$. Within each replicate and
outcome-by-$\nrct$ setting, the same realized RCT data were reused across all
seven overlap/correlation settings, whereas a new independent EC sample was
generated for each setting. Each generated dataset was analyzed using both the
true sampling propensity score and a correctly specified estimated logistic
sampling propensity score. We evaluated a one-sided test at level 0.025 using
10,000 Monte Carlo replicates per setting.

\begin{figure}[t]
\centering
\includegraphics[width=\linewidth]{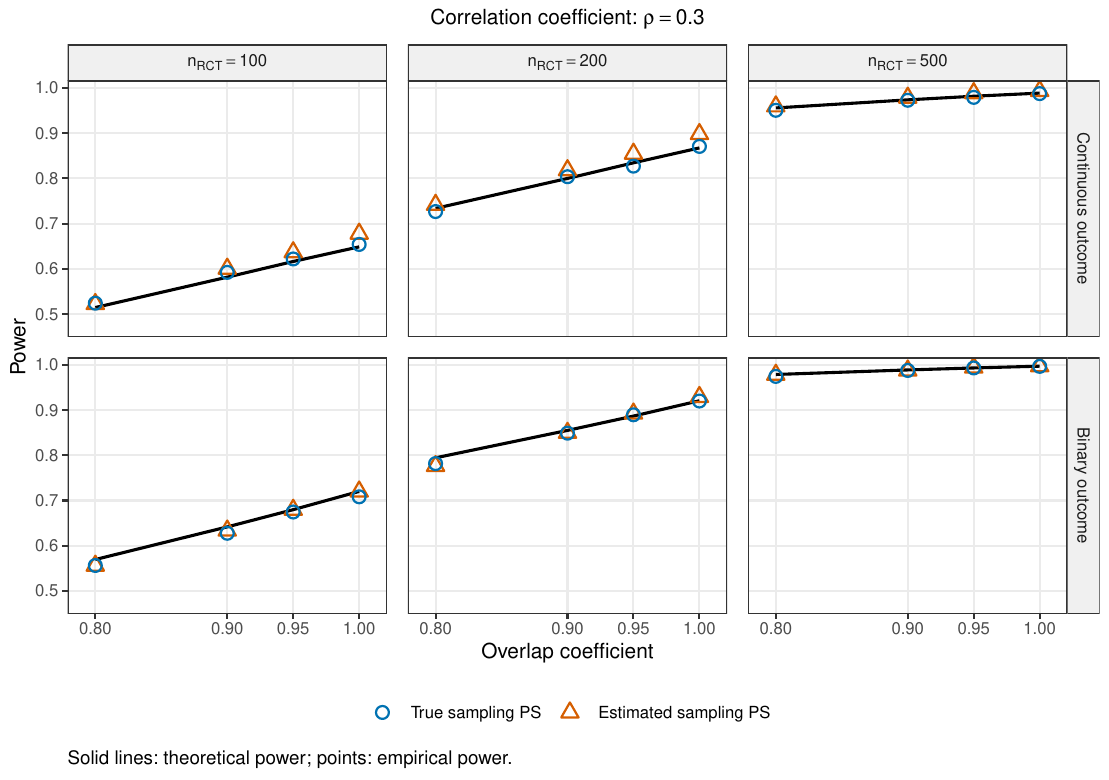}
\caption{Theoretical and empirical power for correlation coefficient $\rho=0.3$, by outcome type, RCT sample size, and overlap coefficient. Solid lines show the proposed theoretical power, and points show empirical power from 10,000 replicates. The perfect-overlap endpoint has $\rho=0$ by convention.}
\label{fig:simulation-power-rho03}
\end{figure}

\subsection{Simulation Results}

Figure~\ref{fig:simulation-power-rho03} compares empirical and theoretical power for $\rho=0.3$; the corresponding results for $\rho=0.1$, shown in Appendix Figure~\ref{fig:simulation-power-rho01}, display the same pattern. 
The theoretical power, calculated based on the true sampling propensity score, closely matched the corresponding empirical power with the true score, with a maximum absolute difference of 0.014 across all settings. Empirical power with the estimated score was slightly higher, with a maximum absolute deviation of 0.031 from the theoretical power.
Overall, empirical power increased with the overlap coefficient. At a fixed imperfect-overlap level, larger $\rho$ generally reduced power. Appendix Table~\ref{tab:simulation-power} reports the detailed power numbers for both binary and continuous outcomes. Detailed bias, empirical standard deviation, mean estimated standard error, and coverage results are reported in Appendix Tables~\ref{tab:simulation-validation-continuous} and~\ref{tab:simulation-validation-binary}.

\section{Application}\label{sec:rd}

\input{table/application_parameters.tex}

We illustrate the proposed calculation using the INJeCT-LUNG randomized phase II trial in resectable stage IB--IIIB non-small-cell lung cancer. The experimental arm adds intratumoral large-surface-area microparticle paclitaxel to neoadjuvant chemotherapy and checkpoint inhibition, whereas the control arm receives neoadjuvant chemotherapy and checkpoint inhibition alone. The protocol specifies 48-month event-free survival (EFS) as the primary outcome. The planned analysis uses survival pseudo-values to account for censoring \citep{andersen2010pseudo} and may augment the concurrent controls with ECs from CheckMate 816 \citep{forde2025overall} and CheckMate 77T \citep{cascone2024perioperative}.

For the design-stage calculation, we use the protocol-specified marginal 48-month EFS rates and define a binary endpoint based on EFS status at the 48-month landmark. Table~\ref{tab:application-parameters} summarizes the five conventional and three HCT-specific inputs together with their practical interpretations for communication with the clinical team. The five conventional inputs are taken directly from the protocol: the two EFS rates, 3:1 treatment allocation, a one-sided type I error of 0.10 for phase II screening, and 80\% target power. The three HCT-specific inputs are set to $\nec=300$, $\phi=0.90$, and $\rho=0.20$. Here, $\nec=300$ assumes that all 300 candidate EC patients are retained as outcome-drift-free after the prespecified compatibility assessment. The overlap coefficient $\phi$ is reported on a 0--1 scale: $\phi=1$ represents perfect overlap, whereas $\phi=0.80$ indicates relatively poor overlap; see Figure~\ref{fig:phi-overlap}. The correlation $\rho=\operatorname{cor}\{W,Y(0)\}$ measures the extent to which the covariate direction distinguishing the RCT and EC populations is also prognostic for control EFS. We use $\rho=0.20$ as the primary value and examine sensitivity to alternative values. {A limitation of this calculation is that it does not account for censoring, whereas the planned analysis uses survival pseudo-values. Consequently, the resulting $\nrct$ may underestimate the required sample size when censoring is substantial. In practice, the binary-based calculation may be inflated to account for the anticipated loss of information due to censoring; extending the proposed framework directly to censored time-to-event outcomes is left for future work.}

Under the primary assumptions, the smallest integer RCT sample size achieving 80\% power is $\appPrimaryMinimum$. A 3:1 allocation requires a complete four-person randomization block, so the operational recommendation is to randomize $\appPrimaryPlanned$ participants: $\appPrimaryTreatment$ to the experimental arm and $\appPrimaryControl$ to the concurrent control arm. The resulting asymptotic power is $\appPrimaryPower$.

The R code reproducing the primary analysis is provided in Appendix Section~\ref{sec:R}. Sensitivity analyses for the number of outcome-drift-free ECs, treatment allocation, overlap coefficient, and correlation coefficient are reported in Appendix Section~\ref{sec:application-sensitivity}.

\section{Discussion}

We developed a prospective sample-size framework for HCTs that supplements five conventional RCT design parameters with three scalar parameters characterizing EC comparability. Under the proposed working models, the number of outcome-drift-free ECs, the overlap coefficient, and the correlation coefficient determine the asymptotic variance of the IPW estimator and hence the required RCT sample size. This 5+3 design retains the familiar structure of conventional RCT planning while incorporating features specific to HCTs. Investigators can specify plausible values for the three HCT-specific parameters and assess their impact on sample size and power. This is important because a larger number of ECs does not necessarily provide greater information when outcome drift is present or covariate overlap is limited. The clinical application illustrates how assumptions about EC comparability translate into prospective design consequences.

The proposed calculations rely on working assumptions, including a logistic sampling model, a normal approximation for the linear sampling propensity score, a homoscedastic linear model for continuous outcomes or a logistic model for binary outcomes, and no conditional outcome drift among the ECs designated as outcome-drift-free. These outcome and sampling models are widely used in clinical trial planning, drug development, and regulatory applications. These assumptions cannot be fully assessed before recruitment, but prospective sample-size calculation necessarily requires assumptions about unobserved quantities. Our parameterization makes these assumptions explicit and interpretable, while simulations support the accuracy of the resulting calculations across a range of settings.

Several limitations of the framework point to concrete refinements. First, the design variance treats the sampling propensity score as known; accounting for its estimation could reduce the conservatism observed in our simulations, at the cost of additional design inputs. Second, the framework targets power under correctly specified working models but does not by itself protect type I error against residual outcome drift among the borrowed ECs; pairing the calculated design with a prespecified selective-borrowing analysis \citep{gao2025improving,zhu2025enhancing,liu2025robust} and reporting drift tipping points or other sensitivity analyses \citep{gordon2025non,liu2026value} would strengthen its credibility in regulatory settings. Third, the three HCT-specific parameters are elicited rather than estimated, and misspecification, particularly an optimistic overlap coefficient, can leave the trial underpowered; a hybrid strategy that begins with the proposed prospective calculation and prespecifies a blinded sample-size reassessment at an interim analysis \citep{guo2024adaptive,kojima2026sample} would allow early data to correct misspecified inputs while preserving the prospective character of the design. Fourth, we focus on the IPW estimator because its design-stage variance can be characterized using limited inputs; extending the framework to more efficient estimators \citep{li2023improving} may further reduce the required sample size. Fifth, the current framework does not directly accommodate censoring; extending the sample-size calculation to time-to-event outcomes \citep{gao2025doubly,yang2026sample} is an important direction for future work.

\vspace{-20pt}
\section*{Acknowledgment}
This project is supported by the Food and Drug Administration (FDA) of the U.S. Department of Health and Human Services (HHS) as part of a financial assistance award, U01FD007934, totaling \$2,556,429 over three years, funded by the FDA/HHS. This work is also supported by R01AG066883, funded by the NIH/HHS. The contents are those of the authors and do not necessarily represent the official views of, nor an endorsement by, FDA/HHS, NIH/HHS, or the U.S. Government.





\bibliography{ref.bib}
\newpage
\appendix

\section{Proofs}

\subsection{Proof of Theorem~\ref{thm:ipw_asymp}}
\begin{proof}
Let
\[
d(X)=1-\bar{\pi}_A\pi_S(X).
\]

\medskip
\noindent\textbf{Consistency.}
For the treated component, randomization and consistency imply
\[
\mathbb{E}(w_1)
=
\mathbb{E}\left(\frac{SA}{\bar{\pi}_A}\right)
=
\bar{\pi}_S
\]
and
\[
\begin{aligned}
\mathbb{E}(w_1Y)
&=
\mathbb{E}\left\{
\frac{SA}{\bar{\pi}_A}Y(1)
\right\}\\
&=
\mathbb{E}\{SY(1)\}
=
\bar{\pi}_S\theta_1.
\end{aligned}
\]

For the control component,
\[
\begin{aligned}
\mathbb{E}(w_0\mid X)
&=
\frac{\pi_S(X)}{d(X)}
\left[
\pi_S(X)(1-\bar{\pi}_A)+1-\pi_S(X)
\right]\\
&=
\pi_S(X),
\end{aligned}
\]
and hence
\[
\mathbb{E}(w_0)=\bar{\pi}_S.
\]
Similarly, consistency and conditional mean exchangeability give
\[
\begin{aligned}
\mathbb{E}(w_0Y\mid X)
&=
\frac{\pi_S(X)}{d(X)}
\left[
\pi_S(X)(1-\bar{\pi}_A)
\mathbb{E}\{Y(0)\mid X,S=1\}
\right.\\
&\qquad\left.
+
\{1-\pi_S(X)\}
\mathbb{E}\{Y(0)\mid X,S=0\}
\right]\\
&=
\pi_S(X)\mathbb{E}\{Y(0)\mid X,S=1\}.
\end{aligned}
\]
Therefore,
\[
\begin{aligned}
\mathbb{E}(w_0Y)
&=
\mathbb{E}\left[
\pi_S(X)\mathbb{E}\{Y(0)\mid X,S=1\}
\right]\\
&=
\mathbb{E}\{SY(0)\}
=
\bar{\pi}_S\theta_0.
\end{aligned}
\]

It follows from the law of large numbers and $\bar{\pi}_S>0$ that
\[
\widehat{\theta}_a
=
\frac{\mathbb{P}_n(w_aY)}{\mathbb{P}_n w_a}
\overset{p}{\longrightarrow}
\theta_a,
\qquad a=0,1,
\]
and consequently
\[
\widehat{\tau}\overset{p}{\longrightarrow}\tau.
\]

\medskip
\noindent\textbf{Asymptotic linear representation.}
For $a=0,1$,
\[
\widehat{\theta}_a-\theta_a
=
\frac{
\mathbb{P}_n\{w_a(Y-\theta_a)\}
}{
\mathbb{P}_n w_a
}.
\]
Since $\mathbb{P}_n w_a\overset{p}{\longrightarrow}\bar{\pi}_S$, Slutsky's theorem yields
\[
\sqrt{n}(\widehat{\theta}_a-\theta_a)
=
\frac{1}{\bar{\pi}_S}
\frac{1}{\sqrt{n}}
\sum_{i=1}^n
w_{a,i}(Y_i-\theta_a)
+o_p(1).
\]
Thus
\[
\sqrt{n}(\widehat{\tau}-\tau)
=
\frac{1}{\sqrt{n}}
\sum_{i=1}^n
\{\phi_1(O_i)-\phi_0(O_i)\}
+o_p(1),
\]
where
\[
\phi_a(O)
=
\frac{w_a(Y-\theta_a)}{\bar{\pi}_S},
\qquad a=0,1.
\]
The preceding calculations imply
\[
\mathbb{E}\{\phi_1(O)\}
=
\mathbb{E}\{\phi_0(O)\}
=
0.
\]

\medskip
\noindent\textbf{Asymptotic variance.}
For the treated component,
\[
\mathbb{V}\{\phi_1(O)\}
=
\frac{1}{\bar{\pi}_S^2\bar{\pi}_A^2}
\mathbb{E}\left[
SA\{Y(1)-\theta_1\}^2
\right].
\]
By treatment randomization,
\[
\begin{aligned}
\mathbb{E}\left[
SA\{Y(1)-\theta_1\}^2
\right]
&=
\bar{\pi}_A
\mathbb{E}\left[
S\{Y(1)-\theta_1\}^2
\right]\\
&=
\bar{\pi}_A\bar{\pi}_S
\mathbb{V}\{Y(1)\mid S=1\}.
\end{aligned}
\]
Therefore,
\[
V_1
=
\mathbb{V}\{\phi_1(O)\}
=
\frac{\mathbb{V}\{Y(1)\mid S=1\}}
{\bar{\pi}_A\bar{\pi}_S}.
\]

Conditional mean exchangeability and conditional second-moment
transportability imply
\[
\begin{aligned}
M_0(X)
&:=
\mathbb{E}\left[
\{Y(0)-\theta_0\}^2\mid X,S=1
\right]\\
&=
\mathbb{E}\left[
\{Y(0)-\theta_0\}^2\mid X,S=0
\right].
\end{aligned}
\]
Therefore,
\[
\begin{aligned}
&\mathbb{E}\left[
w_0^2\{Y-\theta_0\}^2\mid X
\right]\\
&\quad=
\frac{\pi_S(X)^2}{d(X)^2}
\left[
\pi_S(X)(1-\bar{\pi}_A)+1-\pi_S(X)
\right]M_0(X)\\
&\quad=
\frac{\pi_S(X)^2}{d(X)}M_0(X).
\end{aligned}
\]
Moreover,
\[
\mathbb{E}\left[
\{Y(0)-\theta_0\}^2\mid X
\right]
=
M_0(X),
\]
so iterated expectation gives
\[
\begin{aligned}
\mathbb{V}\{\phi_0(O)\}
&=
\frac{1}{\bar{\pi}_S^2}
\mathbb{E}\left[
\frac{\pi_S(X)^2}{d(X)}M_0(X)
\right]\\
&=
\frac{1}{\bar{\pi}_S^2}
\mathbb{E}\left[
\frac{\pi_S(X)^2}
{1-\bar{\pi}_A\pi_S(X)}
\{Y(0)-\theta_0\}^2
\right]\\
&=
V_0.
\end{aligned}
\]

\medskip
\noindent\textbf{Asymptotic distribution.} The stated moment conditions ensure that $V_1$ and $V_0$ are finite.
In particular, because
\[
d(X)\geq1-\bar{\pi}_A
\quad\text{and}\quad
\pi_S(X)^2\leq\pi_S(X),
\]
we have
\[
\begin{aligned}
\mathbb{E}\left[
\frac{\pi_S(X)^2}{d(X)}M_0(X)
\right]
&\leq
\frac{1}{1-\bar{\pi}_A}
\mathbb{E}\{\pi_S(X)M_0(X)\}\\
&=
\frac{\bar{\pi}_S}{1-\bar{\pi}_A}
\mathbb{E}\left[
\{Y(0)-\theta_0\}^2\mid S=1
\right]
<\infty.
\end{aligned}
\]

Finally,
\[
w_1w_0
=
\frac{SA}{\bar{\pi}_A}
\frac{\pi_S(X)(1-A)}{d(X)}
=
0
\qquad\text{almost surely}.
\]
Since both influence functions have mean zero,
\[
\operatorname{Cov}\{\phi_1(O),\phi_0(O)\}=0.
\]
Hence
\[
\mathbb{V}\{\phi_1(O)-\phi_0(O)\}
=
V_1+V_0.
\]
The central limit theorem and Slutsky's theorem now give
\[
\sqrt{n}(\widehat{\tau}-\tau)
\overset{d}{\longrightarrow}
N(0,V_1+V_0),
\]
which proves the result.
\end{proof}

\subsection{Proof of Theorem~\ref{thm:binary_variance}}
\begin{proof}
By the balancing property of the sampling propensity score,
$S\perp X\mid W$. Therefore, Assumption~\ref{ass:ec} implies
\[
\mathbb{E}\{Y(0)\mid S,W\}
=
\mathbb{E}\{Y(0)\mid W\}.
\]
Hence,
\[
\begin{aligned}
\theta_0
&=
\mathbb{E}\{Y(0)\mid S=1\}\\
&=
\frac{1}{\bar{\pi}_S}
\mathbb{E}\left[
S Y(0)
\right]\\
&=
\frac{1}{\bar{\pi}_S}
\mathbb{E}\left[
\mathbb{E}(S\mid W)
\mathbb{E}\{Y(0)\mid W\}
\right]\\
&=
\frac{1}{\bar{\pi}_S}
\mathbb{E}\left[
\expit(W)\expit(a+bW)
\right],
\end{aligned}
\]
which gives \eqref{eq:binary_mean}.

Moreover,
\[
\rho
=
\operatorname{cor}\{W,Y(0)\}
=
\frac{\operatorname{Cov}\{W,Y(0)\}}
{\left[
\mathbb{V}(W)\mathbb{V}\{Y(0)\}
\right]^{1/2}}.
\]
Because $Y(0)$ is binary and
\[
\mathbb{E}\{Y(0)\}
=
\mathbb{E}\left[
\mathbb{E}\{Y(0)\mid W\}
\right]
=
\mathbb{E}\{\expit(a+bW)\},
\]
we have
\[
\mathbb{V}\{Y(0)\}
=
\mathbb{E}\{\expit(a+bW)\}
\left[
1-\mathbb{E}\{\expit(a+bW)\}
\right].
\]
Furthermore, since $\mu_W=\mathbb{E}(W)$,
\[
\begin{aligned}
\operatorname{Cov}\{W,Y(0)\}
&=
\mathbb{E}\left[
(W-\mu_W)Y(0)
\right]\\
&=
\mathbb{E}\left[
(W-\mu_W)
\mathbb{E}\{Y(0)\mid W\}
\right]\\
&=
\mathbb{E}\left[
(W-\mu_W)\expit(a+bW)
\right].
\end{aligned}
\]
Combining these expressions with
$\mathbb{V}(W)=\sigma_W^2$ gives
\eqref{eq:binary_corr}.

Finally, because $Y(0)$ is binary, $Y(0)^2=Y(0)$. Therefore,
\[
\begin{aligned}
\mathbb{E}\left[
\{Y(0)-\theta_0\}^2\mid W
\right]
&=
\mathbb{E}\{Y(0)^2\mid W\}
-2\theta_0\mathbb{E}\{Y(0)\mid W\}
+\theta_0^2\\
&=
(1-2\theta_0)\expit(a+bW)+\theta_0^2.
\end{aligned}
\]
Substituting this expression and
$\pS(X)=\expit(W)$ into \eqref{eq:var}
yields \eqref{eq:V0_binary}.
\end{proof}

\subsection{Proof of Theorem~\ref{thm:continuous_variance}}
\begin{proof}
Under \eqref{eq:Ycon},
\[
\theta_0
=
\mathbb{E}\{Y(0)\mid S=1\}
=
a+b\mu_{W,1},
\]
and hence
\[
Y(0)-\theta_0
=
b(W-\mu_{W,1})+\varepsilon.
\]

Therefore,
\begin{equation}
\label{eq:proof_variance_decomp}
\sigma_Y^2
=
\mathbb{V}\{Y(0)\mid S=1\}
=
b^2\sigma_{W,1}^2+\sigma_\varepsilon^2,
\end{equation}
where the covariance term is zero because
$\mathbb{E}(\varepsilon\mid W,S)=0$, and
$\mathbb{V}(\varepsilon\mid S=1)=\sigma_\varepsilon^2$ follows from
$\mathbb{V}(\varepsilon\mid W,S)=\sigma_\varepsilon^2$.

Moreover,
\begin{equation}
\label{eq:proof_correlation}
\rho
=
\operatorname{cor}\{W,Y(0)\}
=
\frac{b\sigma_W}
{\{b^2\sigma_W^2+\sigma_\varepsilon^2\}^{1/2}},
\end{equation}
where the covariance term between $W$ and $\varepsilon$ is zero by
$\mathbb{E}(\varepsilon\mid W,S)=0$.

Combining \eqref{eq:proof_variance_decomp} and
\eqref{eq:proof_correlation} gives
\[
b^2
=
\frac{\rho^2\sigma_Y^2}
{\rho^2\sigma_{W,1}^2+(1-\rho^2)\sigma_W^2},
\]
and
\[
\sigma_\varepsilon^2
=
\frac{(1-\rho^2)\sigma_W^2\sigma_Y^2}
{\rho^2\sigma_{W,1}^2+(1-\rho^2)\sigma_W^2}.
\]
By the definition of $\kappa$,
\[
b^2=\frac{\rho^2\kappa\sigma_Y^2}{\sigma_W^2},
\qquad
\sigma_\varepsilon^2=(1-\rho^2)\kappa\sigma_Y^2.
\]

Thus,
\[
\begin{aligned}
\mathbb{E}\left[
\{Y(0)-\theta_0\}^2\mid W
\right]
&=
b^2(W-\mu_{W,1})^2+\sigma_\varepsilon^2\\
&=
\sigma_Y^2\kappa
\left\{
(1-\rho^2)
+
\rho^2
\frac{(W-\mu_{W,1})^2}{\sigma_W^2}
\right\}.
\end{aligned}
\]

Substituting this expression into \eqref{eq:var} with
$\pS(X)=\expit(W)$ gives \eqref{eq:V0_continuous}.
\end{proof}

\section{Additional Simulation Results}

\begin{figure}[t]
\centering
\includegraphics[width=\linewidth]{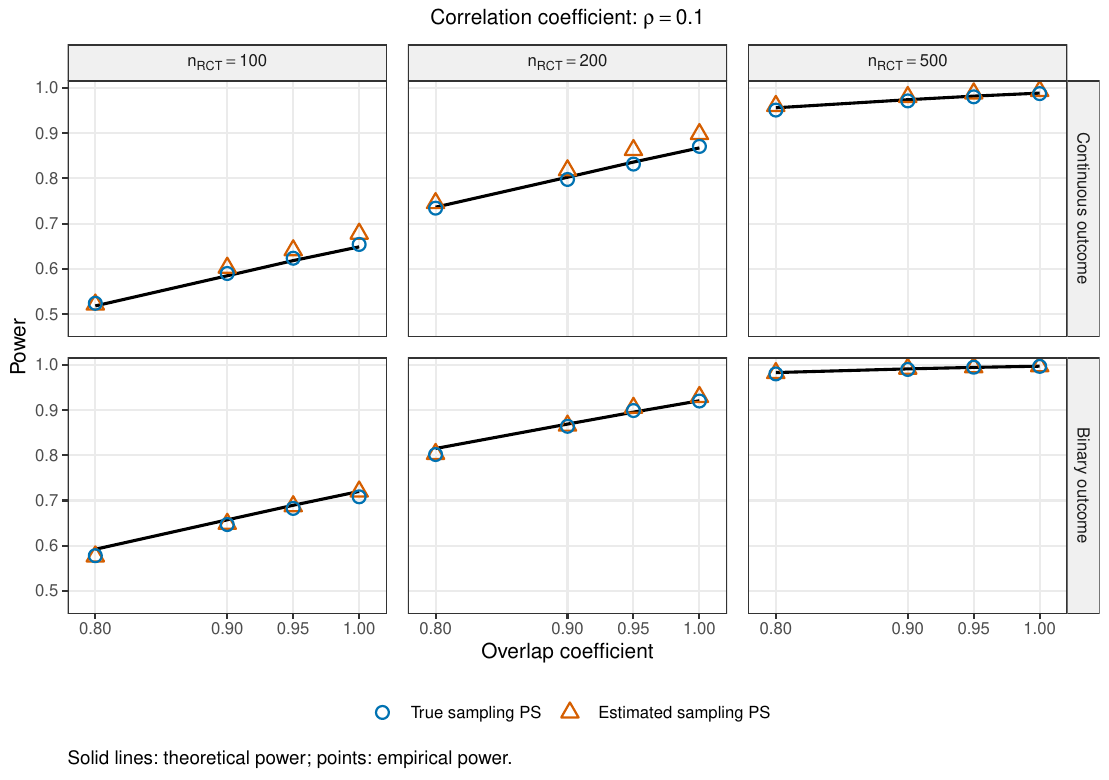}
\caption{Theoretical and empirical power for correlation coefficient
$\rho=0.1$, by outcome type, RCT sample size, and overlap coefficient. Solid
lines show the proposed theoretical power, and points show empirical power
from 10,000 replicates. The perfect-overlap endpoint has $\rho=0$ by
convention.}
\label{fig:simulation-power-rho01}
\end{figure}

Figure~\ref{fig:simulation-power-rho01} presents the power results for
$\rho=0.1$. The agreement between theoretical and empirical power and the
increase in power with the overlap coefficient are similar to the findings for
$\rho=0.3$ in Figure~\ref{fig:simulation-power-rho03}.

Table~\ref{tab:simulation-power} reports theoretical and
empirical power for the binary and continuous outcomes.

Tables~\ref{tab:simulation-validation-continuous}
and~\ref{tab:simulation-validation-binary} report the bias, empirical standard
deviation, mean estimated standard error, and coverage for every scenario.
The maximum absolute standardized bias, defined as absolute bias divided by the empirical standard deviation, was 3.3\%, supporting the consistency of the IPW estimator. With the true sampling propensity score, empirical-to-theoretical variance ratios ranged from 0.976 to 1.032 and coverage from 0.943 to 0.951, indicating accurate design calculations despite the nonnormality of $W$ induced by mixed continuous and binary covariates. With the estimated sampling propensity score, variance ratios were 0.909--0.995 for binary outcomes and 0.725--0.915 for continuous outcomes, reflecting efficiency gains from propensity score estimation. Accordingly, direct plug-in intervals were conservative, with coverage of 0.958--0.977.

\input{table/simulation_binary.tex}

\input{table/simulation_validation_binary.tex}

\input{table/simulation_validation_continuous.tex}

\section{R Code for the Primary Design Calculation}
\label{sec:R}

\begin{lstlisting}[language=R]
install.packages("remotes")  # only if needed
remotes::install_github("ke-zhu/hctdesign")
library(hctdesign)
fit <- hct_sample_size(
  target_power = 0.80,
  tau = 0.603 - 0.485,
  pi_a = 3 / 4,
  alpha = 0.10,
  n_ec = 300,
  phi = 0.90,
  rho = 0.20,
  outcome = "binary",
  theta0 = 0.485,
  alternative = "greater"
)
fit$n_rct                   # 173
ceiling(fit$n_rct / 4) * 4  # 176 for complete four-person randomization block
fit                         # full results
\end{lstlisting}

The complete code is available at \url{github.com/ke-zhu/hct-sample-size}.

\section{Sensitivity Analyses for the Real Data Application}
\label{sec:application-sensitivity}

\subsection{EC Sample Size and Treatment Allocation}

Table~\ref{tab:application-ec-allocation} jointly varies the number of outcome-drift-free ECs and the randomization ratio while holding all other inputs at their primary values. 

\input{table/application_ec_allocation.tex}

\subsection{Overlap and Correlation Coefficients}

Table~\ref{tab:application-overlap-correlation} examines uncertainty in
population comparability under the primary values $\nec=300$ and
$\bpA=0.75$. The operational RCT sample size increases from 140 under perfect
overlap to 156, 176, and 220 as $\phi$ decreases to 0.95, 0.90, and 0.80,
respectively. Thus, the overlap coefficient has a substantial impact and
should be justified using the expected covariate distributions. Over
the range considered, changing $\rho$ from 0 to 0.30 changes the mathematical
minimum by at most one participant and does not change the block-rounded
sample size. This small effect is specific to the present EFS assumptions and
should not be assumed in other applications.

\input{table/application_overlap_correlation.tex}

\end{document}

%% file: notation.tex
\usepackage{url}

\usepackage{tabularx}
\usepackage{authblk}
\usepackage{xcolor}

\newcommand{\T}{\top}

\newcommand{\nrct}{n_{\rm RCT}}
\newcommand{\nec}{n_{\rm EC}}

\newcommand{\bpA}{\bar{\pi}_A}
\newcommand{\pS}{\pi_S}
\newcommand{\bpS}{\bar{\pi}_S}

\newcommand{\expit}{\mathrm{expit}}
\newcommand{\logit}{\operatorname{logit}}

\usepackage{listings}

%% file: table/application_numbers.tex
\newcommand{\appPrimaryMinimum}{173}
\newcommand{\appPrimaryPlanned}{176}
\newcommand{\appPrimaryTreatment}{132}
\newcommand{\appPrimaryControl}{44}
\newcommand{\appPrimaryPower}{80.4\%}

%% file: table/application_parameters.tex
\begin{table}[p]
\centering
\caption{Design parameters for sample size calculation in the HCT application.}
\label{tab:application-parameters}
\small
\setlength{\tabcolsep}{6pt}
\begin{tabular}{
  >{\centering\arraybackslash}p{1.8cm}
  >{\centering\arraybackslash}p{1.8cm}
  >{\raggedright\arraybackslash}p{10.0cm}}
\toprule
Input & Primary value & Practical interpretation \\
\midrule
$\theta_0$ & 0.485
& Expected 48-month EFS rate under control in the RCT population \\
$\tau$ & 0.118
& Absolute treatment effect, corresponding to an increase in 48-month EFS from 0.485 to 0.603 \\
$\bpA$ & 0.75
& Treatment allocation probability, corresponding to 3:1 randomization \\
$\alpha$ & 0.10
& One-sided type I error for the superiority test \\
$1-\beta$ & 0.80
& Target power for detecting the prespecified treatment effect \\
\midrule
$n_{\rm EC}$ & 300
& Number of EC patients retained as outcome-drift-free after the prespecified compatibility assessment \\
$\phi$ & 0.90
& Degree of covariate overlap between the RCT and EC populations, measured through their sampling propensity score distributions \\
$\rho$ & 0.20
& Association between the covariate direction distinguishing the RCT and EC populations and the control outcome, quantified by $\operatorname{cor}\{W,Y(0)\}$ \\
\bottomrule
\end{tabular}
\end{table}

%% file: table/simulation_binary.tex
\begin{table}[t]
\centering
\caption{Theoretical and empirical power for binary and continuous outcomes.}
\label{tab:simulation-power}
\small
\setlength{\tabcolsep}{4pt}
\begin{tabular}{rrrcccccc}
\toprule
& & &
\multicolumn{3}{c}{Binary outcome} &
\multicolumn{3}{c}{Continuous outcome} \\
\cmidrule(lr){4-6} \cmidrule(lr){7-9}
$n_{\mathrm{RCT}}$ & $\phi$ & $\rho$
& Theoretical & True PS & Estimated PS
& Theoretical & True PS & Estimated PS \\
\midrule
100 & 0.80 & 0.1 & 0.592 & 0.578 & 0.577 & 0.518 & 0.524 & 0.521 \\
100 & 0.80 & 0.3 & 0.570 & 0.556 & 0.556 & 0.515 & 0.524 & 0.522 \\
100 & 0.90 & 0.1 & 0.657 & 0.647 & 0.648 & 0.585 & 0.590 & 0.603 \\
100 & 0.90 & 0.3 & 0.642 & 0.628 & 0.634 & 0.582 & 0.592 & 0.601 \\
100 & 0.95 & 0.1 & 0.689 & 0.682 & 0.687 & 0.618 & 0.624 & 0.641 \\
100 & 0.95 & 0.3 & 0.680 & 0.675 & 0.679 & 0.616 & 0.622 & 0.637 \\
100 & 1.00 & 0.0 & 0.720 & 0.708 & 0.720 & 0.649 & 0.654 & 0.678 \\
\midrule
200 & 0.80 & 0.1 & 0.815 & 0.802 & 0.803 & 0.736 & 0.734 & 0.745 \\
200 & 0.80 & 0.3 & 0.794 & 0.782 & 0.776 & 0.734 & 0.727 & 0.742 \\
200 & 0.90 & 0.1 & 0.869 & 0.864 & 0.866 & 0.803 & 0.798 & 0.818 \\
200 & 0.90 & 0.3 & 0.855 & 0.849 & 0.850 & 0.800 & 0.804 & 0.818 \\
200 & 0.95 & 0.1 & 0.895 & 0.899 & 0.905 & 0.836 & 0.832 & 0.863 \\
200 & 0.95 & 0.3 & 0.887 & 0.889 & 0.892 & 0.834 & 0.827 & 0.855 \\
200 & 1.00 & 0.0 & 0.921 & 0.920 & 0.929 & 0.867 & 0.871 & 0.898 \\
\midrule
500 & 0.80 & 0.1 & 0.983 & 0.980 & 0.982 & 0.956 & 0.951 & 0.960 \\
500 & 0.80 & 0.3 & 0.979 & 0.974 & 0.978 & 0.956 & 0.951 & 0.959 \\
500 & 0.90 & 0.1 & 0.991 & 0.990 & 0.991 & 0.974 & 0.971 & 0.980 \\
500 & 0.90 & 0.3 & 0.989 & 0.988 & 0.987 & 0.974 & 0.973 & 0.979 \\
500 & 0.95 & 0.1 & 0.994 & 0.995 & 0.995 & 0.982 & 0.980 & 0.988 \\
500 & 0.95 & 0.3 & 0.993 & 0.993 & 0.994 & 0.982 & 0.979 & 0.988 \\
500 & 1.00 & 0.0 & 0.997 & 0.997 & 0.997 & 0.989 & 0.987 & 0.993 \\
\bottomrule
\end{tabular}

\begin{minipage}{0.96\linewidth}
\vspace{5pt}
\footnotesize
Each setting used 10,000 replicates. At perfect overlap ($\phi=1$), $\rho=0$ by convention. PS denotes sampling propensity score.
\end{minipage}
\end{table}

%% file: table/simulation_validation_binary.tex
\begin{table}[t]
\centering
\caption{Operating characteristics of the IPW estimator for binary outcomes.}
\label{tab:simulation-validation-binary}
\footnotesize
\setlength{\tabcolsep}{2.5pt}
\begin{tabular}{rrrrrrrrrrr}
\toprule
 & & & \multicolumn{4}{c}{True sampling PS} & \multicolumn{4}{c}{Estimated sampling PS} \\
\cmidrule(lr){4-7} \cmidrule(lr){8-11}
$n_{\mathrm{RCT}}$ & $\phi$ & $\rho$ & Bias & Emp. SD & Mean SE & Coverage & Bias & Emp. SD & Mean SE & Coverage \\
\midrule
100 & 0.80 & 0.1 & -0.0016 & 0.0739 & 0.0731 & 0.944 & -0.0015 & 0.0721 & 0.0736 & 0.952 \\
100 & 0.80 & 0.3 & -0.0007 & 0.0747 & 0.0749 & 0.948 & -0.0007 & 0.0734 & 0.0754 & 0.956 \\
100 & 0.90 & 0.1 & -0.0016 & 0.0676 & 0.0675 & 0.948 & -0.0017 & 0.0656 & 0.0679 & 0.957 \\
100 & 0.90 & 0.3 & -0.0020 & 0.0690 & 0.0686 & 0.947 & -0.0019 & 0.0667 & 0.0691 & 0.957 \\
100 & 0.95 & 0.1 & -0.0015 & 0.0652 & 0.0649 & 0.945 & -0.0015 & 0.0629 & 0.0653 & 0.955 \\
100 & 0.95 & 0.3 & -0.0018 & 0.0651 & 0.0656 & 0.948 & -0.0018 & 0.0630 & 0.0660 & 0.958 \\
100 & 1.00 & 0.0 & -0.0021 & 0.0627 & 0.0625 & 0.943 & -0.0019 & 0.0600 & 0.0628 & 0.957 \\
\midrule
200 & 0.80 & 0.1 & 0.0001 & 0.0568 & 0.0561 & 0.946 & -0.0001 & 0.0558 & 0.0564 & 0.950 \\
200 & 0.80 & 0.3 & -0.0005 & 0.0579 & 0.0575 & 0.946 & -0.0007 & 0.0572 & 0.0577 & 0.950 \\
200 & 0.90 & 0.1 & 0.0000 & 0.0522 & 0.0519 & 0.947 & 0.0000 & 0.0508 & 0.0521 & 0.954 \\
200 & 0.90 & 0.3 & 0.0005 & 0.0536 & 0.0529 & 0.946 & 0.0003 & 0.0526 & 0.0532 & 0.951 \\
200 & 0.95 & 0.1 & 0.0006 & 0.0495 & 0.0497 & 0.949 & 0.0003 & 0.0477 & 0.0499 & 0.958 \\
200 & 0.95 & 0.3 & -0.0001 & 0.0501 & 0.0504 & 0.950 & -0.0004 & 0.0488 & 0.0506 & 0.956 \\
200 & 1.00 & 0.0 & 0.0000 & 0.0481 & 0.0474 & 0.946 & 0.0000 & 0.0464 & 0.0476 & 0.954 \\
\midrule
500 & 0.80 & 0.1 & 0.0005 & 0.0397 & 0.0392 & 0.946 & 0.0005 & 0.0391 & 0.0393 & 0.951 \\
500 & 0.80 & 0.3 & 0.0003 & 0.0404 & 0.0401 & 0.946 & 0.0003 & 0.0400 & 0.0401 & 0.950 \\
500 & 0.90 & 0.1 & 0.0004 & 0.0372 & 0.0370 & 0.947 & 0.0002 & 0.0364 & 0.0371 & 0.953 \\
500 & 0.90 & 0.3 & 0.0005 & 0.0382 & 0.0377 & 0.946 & 0.0004 & 0.0377 & 0.0378 & 0.949 \\
500 & 0.95 & 0.1 & 0.0005 & 0.0360 & 0.0356 & 0.944 & 0.0004 & 0.0351 & 0.0357 & 0.953 \\
500 & 0.95 & 0.3 & 0.0004 & 0.0363 & 0.0362 & 0.950 & 0.0002 & 0.0356 & 0.0363 & 0.953 \\
500 & 1.00 & 0.0 & 0.0002 & 0.0344 & 0.0340 & 0.947 & 0.0002 & 0.0333 & 0.0341 & 0.957 \\
\bottomrule
\end{tabular}
\begin{minipage}{0.9\linewidth}
\vspace{5pt}
\footnotesize Bias is the Monte Carlo mean of the estimated treatment effect minus the true effect; Emp. SD is its empirical standard deviation; Mean SE is the mean estimated standard error; Coverage is the empirical coverage of the nominal 95\% confidence interval. Each setting used 10,000 replicates. PS denotes sampling propensity score.
\end{minipage}
\end{table}

%% file: table/simulation_validation_continuous.tex
\begin{table}[t]
\centering
\caption{Operating characteristics of the IPW estimator for continuous outcomes.}
\label{tab:simulation-validation-continuous}
\footnotesize
\setlength{\tabcolsep}{2.5pt}
\begin{tabular}{rrrrrrrrrrr}
\toprule
 & & & \multicolumn{4}{c}{True sampling PS} & \multicolumn{4}{c}{Estimated sampling PS} \\
\cmidrule(lr){4-7} \cmidrule(lr){8-11}
$n_{\mathrm{RCT}}$ & $\phi$ & $\rho$ & Bias & Emp. SD & Mean SE & Coverage & Bias & Emp. SD & Mean SE & Coverage \\
\midrule
100 & 0.80 & 0.1 & 0.0011 & 0.1513 & 0.1495 & 0.943 & 0.0016 & 0.1377 & 0.1507 & 0.966 \\
100 & 0.80 & 0.3 & 0.0014 & 0.1511 & 0.1505 & 0.944 & 0.0023 & 0.1380 & 0.1515 & 0.967 \\
100 & 0.90 & 0.1 & -0.0005 & 0.1383 & 0.1374 & 0.946 & 0.0004 & 0.1222 & 0.1385 & 0.971 \\
100 & 0.90 & 0.3 & 0.0022 & 0.1385 & 0.1381 & 0.951 & 0.0025 & 0.1228 & 0.1390 & 0.971 \\
100 & 0.95 & 0.1 & -0.0006 & 0.1318 & 0.1319 & 0.949 & 0.0008 & 0.1152 & 0.1328 & 0.972 \\
100 & 0.95 & 0.3 & -0.0003 & 0.1325 & 0.1323 & 0.947 & 0.0001 & 0.1165 & 0.1332 & 0.972 \\
100 & 1.00 & 0.0 & -0.0001 & 0.1269 & 0.1272 & 0.948 & 0.0000 & 0.1091 & 0.1279 & 0.977 \\
\midrule
200 & 0.80 & 0.1 & 0.0007 & 0.1166 & 0.1156 & 0.948 & 0.0006 & 0.1078 & 0.1161 & 0.963 \\
200 & 0.80 & 0.3 & 0.0009 & 0.1170 & 0.1159 & 0.948 & 0.0008 & 0.1090 & 0.1165 & 0.964 \\
200 & 0.90 & 0.1 & -0.0004 & 0.1071 & 0.1065 & 0.948 & -0.0003 & 0.0969 & 0.1071 & 0.970 \\
200 & 0.90 & 0.3 & 0.0018 & 0.1075 & 0.1068 & 0.947 & 0.0009 & 0.0979 & 0.1074 & 0.970 \\
200 & 0.95 & 0.1 & -0.0009 & 0.1032 & 0.1018 & 0.946 & -0.0002 & 0.0908 & 0.1023 & 0.974 \\
200 & 0.95 & 0.3 & -0.0004 & 0.1040 & 0.1020 & 0.947 & 0.0000 & 0.0916 & 0.1026 & 0.973 \\
200 & 1.00 & 0.0 & -0.0002 & 0.0976 & 0.0972 & 0.950 & -0.0011 & 0.0850 & 0.0977 & 0.975 \\
\midrule
500 & 0.80 & 0.1 & -0.0005 & 0.0822 & 0.0818 & 0.946 & -0.0002 & 0.0783 & 0.0819 & 0.959 \\
500 & 0.80 & 0.3 & -0.0008 & 0.0821 & 0.0818 & 0.947 & -0.0006 & 0.0783 & 0.0819 & 0.958 \\
500 & 0.90 & 0.1 & -0.0009 & 0.0773 & 0.0768 & 0.945 & -0.0003 & 0.0721 & 0.0770 & 0.962 \\
500 & 0.90 & 0.3 & -0.0003 & 0.0769 & 0.0769 & 0.950 & -0.0004 & 0.0717 & 0.0771 & 0.964 \\
500 & 0.95 & 0.1 & -0.0014 & 0.0742 & 0.0740 & 0.947 & -0.0007 & 0.0678 & 0.0742 & 0.968 \\
500 & 0.95 & 0.3 & -0.0004 & 0.0747 & 0.0741 & 0.948 & -0.0003 & 0.0683 & 0.0743 & 0.966 \\
500 & 1.00 & 0.0 & -0.0005 & 0.0712 & 0.0708 & 0.948 & -0.0005 & 0.0642 & 0.0710 & 0.969 \\
\bottomrule
\end{tabular}
\begin{minipage}{0.9\linewidth}
\vspace{5pt}
\footnotesize Bias is the Monte Carlo mean of the estimated treatment effect minus the true effect; Emp. SD is its empirical standard deviation; Mean SE is the mean estimated standard error; Coverage is the empirical coverage of the nominal 95\% confidence interval. Each setting used 10,000 replicates. PS denotes sampling propensity score.
\end{minipage}
\end{table}

%% file: table/application_ec_allocation.tex
\begin{table}[htbp]
\centering
\caption{Required randomized sample size by number of outcome-drift-free ECs and treatment allocation.}
\label{tab:application-ec-allocation}
\setlength{\tabcolsep}{12pt}
\begin{tabular}{lrrr}
\toprule
$n_{\rm EC}$ & $1{:}1$ & $2{:}1$ & $3{:}1$ \\
\midrule
0 (RCT benchmark) & 318 & 360 & 428 \\
100 & 260 & 252 & 268 \\
200 & 232 & 207 & 204 \\
300 & 216 & 183 & \textbf{176} \\
\bottomrule
\end{tabular}
\par\smallskip
\begin{minipage}{0.88\linewidth}
\footnotesize Entries are RCT sample sizes rounded up to a complete randomization block. For $n_{\rm EC}>0$, $\phi=0.90$ and $\rho=0.20$; for the RCT benchmark, $\phi$ and $\rho$ are not applicable.
\end{minipage}
\end{table}

%% file: table/application_overlap_correlation.tex
\begin{table}[htbp]
\centering
\caption{Required randomized sample size by overlap and correlation coefficients.}
\label{tab:application-overlap-correlation}
\begin{tabular}{rrrrr}
\toprule
$\phi$ & $\rho=0$ & $\rho=0.10$ & $\rho=0.20$ & $\rho=0.30$ \\
\midrule
0.80 & 217 (220) & 217 (220) & 218 (220) & 219 (220) \\
0.90 & 173 (176) & 173 (176) & \textbf{173 (176)} & 174 (176) \\
0.95 & 153 (156) & 153 (156) & 153 (156) & 153 (156) \\
1.00 & 137 (140) & -- & -- & -- \\
\bottomrule
\end{tabular}
\par\smallskip
\begin{minipage}{0.88\linewidth}
\footnotesize Entries are minimum RCT sample sizes, with values rounded up to a complete four-person randomization block in parentheses. Calculations use $n_{\rm EC}=300$ and 3:1 allocation. At perfect overlap ($\phi=1$), $\rho=0$ by convention; the remaining cells are therefore not applicable.
\end{minipage}
\end{table}